\newcites{sec}{References (Appendix)}
\newcommand\Ut{\tilde{\bm{U}}}
\newcommand\Wt{\tilde{\bm{W}}}
\newcommand\Sx{\bm{\Sigma_{x}}}
\newcommand\Sxi{\bm{\Sigma_{x}}^{-1}}
\newcommand\xxt{\bm{x}\bm{x}^T}
\newcommand\Gbar{\bar{\bar{\bm{\mathcal{G}}}}}
\newcommand\G{\bm{\mathcal{G}}}
\theoremstyle{plain}
\newtheorem{theorem}{Theorem}[section]
\newtheorem{proposition}[theorem]{Proposition}
\theoremstyle{definition}
\theoremstyle{remark}
\newtheorem{remark}[theorem]{Remark}
\icmltitlerunning{Representational Drift in a Two-Layer Neural Network}
\begin{document}

\twocolumn[

\icmltitle{Stochastic Gradient Descent-Induced Drift of Representation in a Two-Layer Neural Network}


\begin{icmlauthorlist}
\icmlauthor{Farhad Pashakhanloo}{aff1}
\icmlauthor{Alexei Koulakov}{aff1}
\end{icmlauthorlist}

\icmlaffiliation{aff1}{Cold Spring Harbor Laboratory, Cold Spring Harbor, New York, NY, USA}

\icmlcorrespondingauthor{Farhad Pashakhanloo}{pashakh@cshl.edu}

\icmlkeywords{Machine Learning, Continual Learning, Stochastic Gradient Descent, Representation Learning}

\vskip 0.3in
]


\printAffiliationsAndNotice

\begin{abstract}
Representational drift refers to over-time changes in neural activation accompanied by a stable task performance. 
Despite being observed in the brain and in artificial networks, the mechanisms of drift and its implications are not fully understood.
Motivated by recent experimental findings of stimulus-dependent drift in the piriform cortex, we use theory and simulations to study this phenomenon in a two-layer linear feedforward network. Specifically, in a continual online learning scenario, we study the drift induced by the noise inherent in the Stochastic Gradient Descent (SGD). 
By decomposing the learning dynamics into the normal and tangent spaces of the minimum-loss manifold, we show the former corresponds to a finite variance fluctuation, while the latter could be considered as an effective diffusion process on the manifold.
We analytically compute the fluctuation and the diffusion coefficients for the stimuli representations in the hidden layer as functions of network parameters and input distribution. Further, consistent with experiments, we show that the drift rate is slower for a more frequently presented stimulus. Overall, our analysis yields a theoretical framework for better understanding of the drift phenomenon in biological and artificial neural networks.
\end{abstract}


\section{Introduction} \label{introduction}
Representational drift has been observed across different parts of the nervous system, such as in the hippocampus \citep{ziv2013long}, sensorimotor \citep{2019causes}, and visual systems \citep{2021visual,marks2021stimulus}.
A recent study in the piriform cortex  \citep{nature2021}, a brain structure processing information about smells, demonstrated drift of odorant representations despite stable odor identification. 
Such drift was characterized by a gradual and across-days decay in the self-similarity of the stimulus representation.
Additionally, in the same study, the drift was shown to be stimulus dependent. It was shown that a more familiar stimulus drifts at a smaller rate \citep{nature2021}.

The mechanisms and implications of the drift in the brain and artificial neural networks are still under investigation (see recent reviews by \citet{masset2022drifting}, \citet{2019causes}, and \citet{driscoll2022representational}).
Recent modeling studies have shown that drift could happen in the presence of synaptic or other types of noise \citep{qin2023coordinated,aitken2022geometry}. 
Using simulations of neural networks, \citet{aitken2022geometry} showed different noise types injected during training could lead to drift with qualitatively different patterns and geometries. However, no theoretical consideration of the drift was provided in that study. \citet{qin2023coordinated} studied the drift in a network with a similarity-matching objective and a Hebbian/anti-Hebbian learning rule. They showed that noisy synaptic updates could lead to a random-walk exploration of the solution space. Other mechanisms have been suggested on how a stable readout could be performed despite an evolving population code \citep{rule2022self, rule2020stable,kalle2021drifting}.

One important source of noise during learning for both natural and artificial neural networks is the sampling noise that arises from the stochasticity in observing the data. It is therefore natural to ask if and how this type of noise can lead to drift, and whether it can explain the stimulus-dependency of the rate of the drift observed in experiments. Here, we aim to answer these questions in a two-layer linear neural network model that undergoes online continual learning via Stochastic Gradient Descent (SGD). In addition to being common in machine learning, feed-forward networks are a reasonable first approximation to sensory (olfactory) processing in the brain. Specifically, we found the two-layer linear network to be one of the simplest models that enables studying the representational drift in the hidden layer, and yet allows for analytical tractability.


\section{Model and Theory}
A multilayer network including $L$ layers can be described by a set of weight matrices $\bm{W}^{(l)}$, where index $l$ enumerates the individual layers.
The network can be represented by a vector in a vector space constructed from non-commuting weight matrices:
\begin{align}
\bm{\theta} = \left( \bm{W}^{(1)}, \bm{W}^{(2)}, ... , \bm{W}^{(L)} \right).
\label{vector}
\end{align}
Our goal is to study the dynamics of learning and the drift in this space for a two-layer neural network described below (Section \ref{nnmodel}). We derive the manifold of stable minimum-loss (Section \ref{degeneracy}), and study its first order differential geometry (Section \ref{diffgeomman}).
We use the Euclidean inner product in this space, which means for two vectors $\bm{\theta}_1$ and $\bm{\theta}_2$, we have:
\begin{align}
    \bm{\theta}_1^T\bm{\theta}_2 = \sum_l \tr(\bm{W}_1^{(l)T}\bar{\bm{W}}_2^{(l)}).
    \label{innerprod}
\end{align} 
By characterizing the movements normal and tangential to the manifold (Sections \ref{fluctuation} and \ref{tangupdate}), we derive an effective diffusion process on the manifold and calculate the corresponding diffusion ("drift") rates for the representations (Section \ref{diffprocess}). Finally, we apply this framework to study the dependency of the drift on the input statistics in two cases of isotropic Gaussian stimuli (Section \ref{whitenoise}), and a case with a frequently presented stimulus (Section \ref{freqstimulus}). Our results are further validated by numerical simulations.

\subsection{Neural network model} \label{nnmodel}
Our model consists of a linear feedforward neural network with an expansive hidden layer, as shown in Figure \ref{figmodel}. The input to the network is the external stimulus ($\bm{x} \in \mathbb{R}^n$), the hidden layer activation is the representation of the stimulus ($\bm{h} \in \mathbb{R}^p$, e.g. neural activities in the piriform cortex), and the output is the task outcome ($\bm{y} \in \mathbb{R}^m$, e.g. the percept or an identity of an stimulus).
\begin{figure}
\vspace{0.1in}
  \centering
\includegraphics[width=0.9\columnwidth]{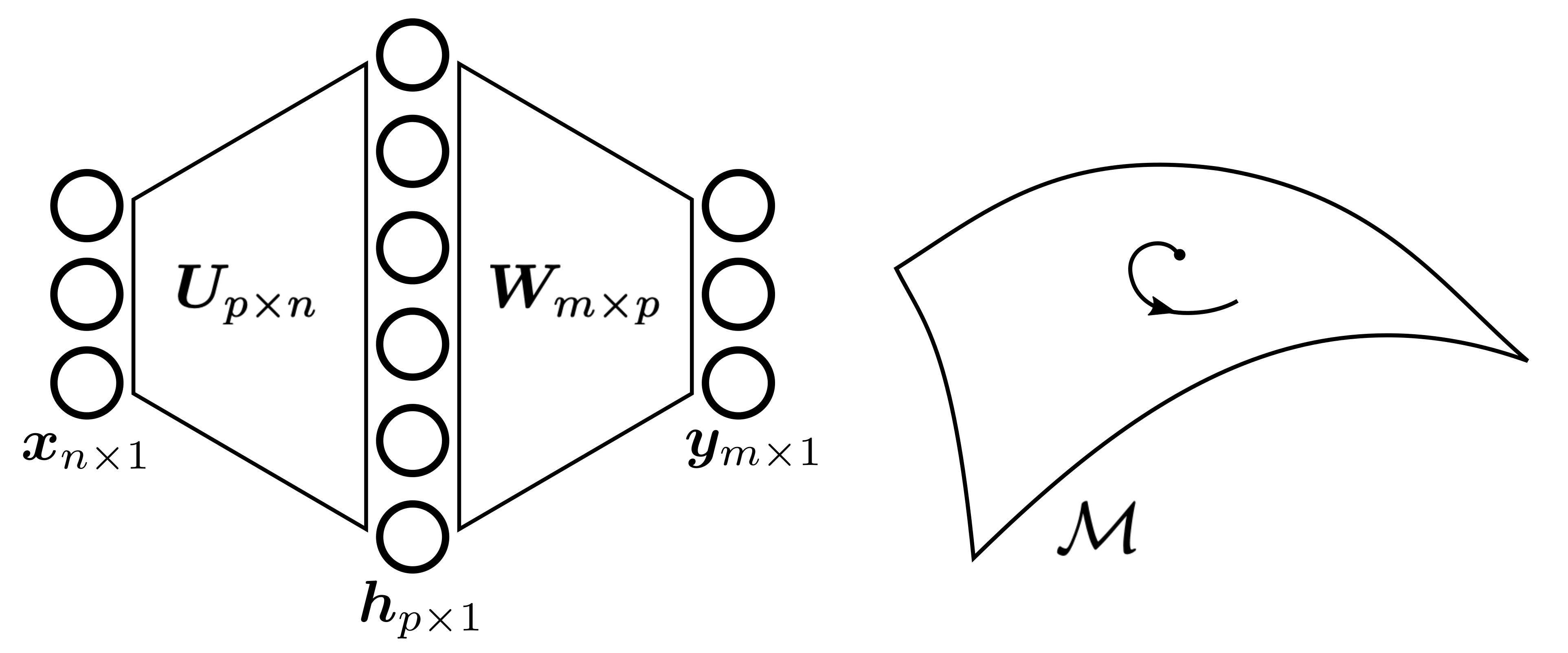}
  \caption{(left) Neural network model. (right) Manifold of minimum-loss ($\mathcal{M}$) in the parameter space.}
   \label{figmodel}
   \vspace{-0.1in}
\end{figure}
The two weight matrices are $\bm{U} = \bm{W}^{(1)}$ and $\bm{W} = \bm{W}^{(2)}$ respectively, and the predicted output is $\bm{\hat{y}} = \bm{W}\bm{U}\bm{x}$.
In the parameter space the network is denoted by $\bm{\theta} = (\bm{U},\bm{W})$.

We consider a continual online learning scenario in which the network sees one sample at a time taken independently from a fixed data distribution. The objective function consist of a Mean Squared Error (MSE) loss and L2-regularization. Hence, the sample loss becomes:
\begin{align}
 l(\bm{x},\bm{y};\bm{\theta}) =\!\frac{1}{2}\norm{\bm{y}\!-\!\bm{W}\bm{U}\bm{x}}^2 + \frac{\gamma}{2}(\norm{\bm{W}}^2_F\!+ \!\norm{\bm{U}}^2_F),
\label{loss}
\end{align}
where $\gamma$ is the regularization coefficient.  
Further, we assume $\bm{y} = \bm{x}$, which means the goal of the network is to learn the identity mapping from the input to the output. Note this essentially becomes an autoencoder with an expansive hidden layer ($p \geq n=m$). Finally, the learning occurs via the SGD with a minibatch size of one. The update in the parameter vector $\bm{\theta}$ upon observing sample $\bm{x}$ is: 
\begin{align}
    \Delta \bm{\theta} = -\eta \bm{g}(\bm{x};\bm{\theta}),
\label{sgd}
\end{align}
where $\bm{g}(\bm{x};\bm{\theta}) = \nabla_{\bm{\theta}} l(\bm{x};\bm{\theta})$ is the sample gradient vector, and $\eta$ is the learning rate.
In terms of the weight matrices, this is equivalent to
$\Delta\bm{W} = -\eta\nabla_{\bm{W}}l$ and $\Delta\bm{U} = -\eta\nabla_{\bm{U}}l$.
\subsection{Degeneracy and the manifold of solutions} \label{degeneracy}
A condition that makes representational drift possible is the redundancy of the network parameters associated with achieving a given task performance. In a two-layer network with $\bm{\hat{y}} = \bm{W}\bm{U}\bm{x}$, the transformations $\Wt \to \Wt\bm{Q}$ and $\Ut \to \bm{Q}^{-1}\Ut$ leave the output unchanged for any invertible matrix $\bm{Q}$.
In our model, the presence of L2 regularization confines $\bm{Q}$ to be orthogonal for the loss to remain unchanged. The effect of $\bm{Q}$ is rotation of the representations, and hence it links the degeneracy to a rotational symmetry in the network.
We aim to study whether and how the space of redundant parameters are explored due to the online learning stochasticity and, from that, characterize the rate and the geometry of the drift in the representations.

We define the \textit{manifold of solutions}, $\mathcal{M}$, to be the set of stable critical points in the parameter space.
This manifold represents the redundancy in the model, as all the points on it have the same expected loss value $L(\bm{\theta}) = \langle l(\bm{\theta}) \rangle_x$, and hence are equally preferable (note that $\langle . \rangle_x$ is the expectation over the input distribution).
We analytically derive $\mathcal{M}$ in the following theorem as a function of the input distribution, and for the rest of the paper refer to it as the \textit{manifold}.
\begin{restatable}{theorem}{mantheorem}
\label{theoremM}\textnormal{(Manifold)}
The manifold of solutions for learning the identity map ($\bm{y} = \bm{x}$) is:
\begin{gather}
\mathcal{M}\!:\!\{\tilde{\bm{\theta}}\!=\! (\Ut,\Wt)\!\mid\!\Wt\Wt^T\!=\!\bm{I}_n\!-\!\gamma \Sxi\!, \Ut = \Wt^T\},\!
\label{manifold}
\end{gather}
where $\Sx = \langle \bm{x}\bm{x}^T \rangle_x$ is the second order moment of the input distribution with eigenvalues that are assumed to be greater than $\gamma$.
\end{restatable}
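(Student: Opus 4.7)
The plan is to attack Theorem~\ref{theoremM} by directly solving the first-order conditions $\nabla_{\bm{W}}L=0$ and $\nabla_{\bm{U}}L=0$ and then restricting to the stable branch. Using $\bm{y}=\bm{x}$ together with $\langle\xxt\rangle_x=\Sx$, the stationarity equations read $(\bm{W}\bm{U}-\bm{I}_n)\Sx\bm{U}^T+\gamma\bm{W}=0$ and $\bm{W}^T(\bm{W}\bm{U}-\bm{I}_n)\Sx+\gamma\bm{U}=0$. The first rearranges to $\bm{W}(\bm{U}\Sx\bm{U}^T+\gamma\bm{I}_p)=\Sx\bm{U}^T$; since $\bm{U}\Sx\bm{U}^T+\gamma\bm{I}_p\succ 0$ this inverts, and the push-through identity $\Sx\bm{U}^T(\bm{U}\Sx\bm{U}^T+\gamma\bm{I}_p)^{-1}=(\bm{U}^T\bm{U}+\gamma\Sxi)^{-1}\bm{U}^T$ gives the closed form $\bm{W}=(\bm{U}^T\bm{U}+\gamma\Sxi)^{-1}\bm{U}^T$.

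Next I would substitute this expression for $\bm{W}$ into the second gradient equation. Setting $\bm{P}:=\bm{U}^T\bm{U}+\gamma\Sxi$ and using $\bm{U}^T\bm{U}\Sx=\bm{P}\Sx-\gamma\bm{I}_n$, the two sides share a $\bm{U}\bm{P}^{-1}\Sx$ term that cancels, leaving the compact constraint $\bm{U}(\bm{I}_n-\bm{P}^{-2})=0$, equivalently $\bm{U}\bm{P}^2=\bm{U}$. At any stable critical point I expect $\bm{U}$ to have full column rank $n$ (justified below), which forces $\bm{P}^2=\bm{I}_n$; since $\bm{P}$ is symmetric positive definite, its unique PSD square root gives $\bm{P}=\bm{I}_n$. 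This delivers $\bm{U}^T\bm{U}=\bm{I}_n-\gamma\Sxi$, well-defined by the assumption $\lambda_i(\Sx)>\gamma$, and substitution back gives $\bm{W}=\bm{U}^T$. Hence $\Wt\Wt^T=\bm{U}^T\bm{U}=\bm{I}_n-\gamma\Sxi$ and $\Ut=\Wt^T$, matching the stated form of $\mathcal{M}$.

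The main obstacle is justifying the full-rank selection, i.e., ruling out rank-deficient critical points as stable. The cleanest route I would take is a convex reformulation of the loss in terms of $\bm{M}:=\bm{W}\bm{U}$. Since the data term depends on $(\bm{W},\bm{U})$ only through $\bm{M}$, Fazel's identity $\min_{\bm{W}\bm{U}=\bm{M}}(\norm{\bm{W}}_F^2+\norm{\bm{U}}_F^2)=2\norm{\bm{M}}_*$ reduces $L$ to the strictly convex functional $\tfrac{1}{2}\tr((\bm{I}_n-\bm{M})\Sx(\bm{I}_n-\bm{M})^T)+\gamma\norm{\bm{M}}_*$, whose unique minimizer (from a subgradient calculation with $\partial\norm{\bm{M}^*}_*\ni\bm{I}_n$) is $\bm{M}^*=\bm{I}_n-\gamma\Sxi$; this matrix is positive definite precisely because $\lambda_i(\Sx)>\gamma$. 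Any stable critical point of $L$ must therefore satisfy $\bm{W}\bm{U}=\bm{M}^*$, which has rank $n$ and so forces $\mathrm{rank}(\bm{U})=n$, closing the argument above; the residual freedom on the fiber is exactly the orthogonal rotation that parameterizes $\mathcal{M}$.
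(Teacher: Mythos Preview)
Your first two paragraphs are correct and take a genuinely different route from the paper. The paper first proves $\Ut=\Wt^T$ via a positive-semidefiniteness trick on $(\bm{I}_n-\bm{WU})\Sx$, then simultaneously diagonalizes $\bm{WW}^T$ and $\Sx$ to reduce to scalar cubics, and finally classifies stability by an explicit Hessian eigendecomposition. Your push-through manipulation $\bm{W}=(\bm{U}^T\bm{U}+\gamma\Sxi)^{-1}\bm{U}^T$ followed by the clean reduction to $\bm{U}(\bm{I}_n-\bm{P}^{-2})=0$ is slicker algebra, and the nuclear-norm/Fazel reformulation is an elegant way to pin down the global minimizer $\bm{M}^*=\bm{I}_n-\gamma\Sxi$ without eigenvalue bookkeeping.

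There is, however, a real gap in the third paragraph. The Fazel argument shows that the \emph{global} minimizers of $L$ are exactly the set $\mathcal{M}$; it does not by itself show that every \emph{stable} critical point (Hessian $\succeq 0$) lies in $\mathcal{M}$. Your sentence ``any stable critical point of $L$ must therefore satisfy $\bm{WU}=\bm{M}^*$'' is the missing step: a rank-deficient critical point could in principle be a non-global local minimum (or a degenerate stable point), and the convex lower bound $L(\bm{W},\bm{U})\ge L_{\mathrm{red}}(\bm{WU})$ does not preclude that. The paper closes this gap by exhibiting, for each rank-deficient critical point, an explicit Hessian eigenvector with negative eigenvalue $\gamma-s_i<0$. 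To finish your proof you need either an analogous descent-direction construction at rank-deficient points, or a cited/proved ``no spurious local minima'' result for this regularized linear factorization (and, strictly, also an argument that Hessian-PSD critical points here are genuine local minima). Without one of these, the stability half of the theorem is asserted rather than proved.
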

The above can be shown by first finding the critical points of the loss satisfying $\nabla_{\bm{U}}L = \nabla_{\bm{W}}L = 0$. The solutions consist of saddle-points, as well as the above manifold which has a non-negative Hessian eigenspectrum (see the proof in Appendix \ref{prooftheoremM}).
\subsection{Differential geometry of the manifold} \label{diffgeomman}
In order to study the learning dynamics near the manifold, we characterize its first order differential geometry. This is done by finding the local tangent and normal spaces to the manifold ($T_{\tilde{\bm{\theta}}}\mathcal{M}$ and $N_{\tilde{\bm{\theta}}}\mathcal{M}$, respectively).
\begin{restatable}{lemma}{tanlemma}
\label{lemmatangent}
\textnormal{(Tangent space)}
The local tangent space to the manifold at point $(\Wt^T,\Wt)$ 
is spanned by vectors $\bm{t} = (\bm{T_W}^T,\bm{T_W})$ with $\bm{T_W} = \Wt\Wt^T\bm{\Omega}\Wt + \bm{K}\Wt_{\bot}$, where
$\bm{\Omega}$ is an arbitrary $n \times n$ skew-symmetric matrix, $\bm{K}$ is an arbitrary $n \times (p-n)$ matrix, and $\Wt_{\perp}$ is a full-rank $(p-n) \times p$ matrix whose rows are orthogonal to rows of $\Wt$.
\end{restatable}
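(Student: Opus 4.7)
The plan is to linearize the two defining equations of $\mathcal{M}$ from Theorem \ref{theoremM} at the point $(\tilde{\bm{W}}^T,\tilde{\bm{W}})$, and then parameterize the solution set of the resulting linear constraints. Since a tangent vector $\bm{t}$ at $\tilde{\bm{\theta}}$ is by definition the velocity of a smooth curve in $\mathcal{M}$ through $\tilde{\bm{\theta}}$, I will write $\bm{t} = (\bm{T_U},\bm{T_W})$ and differentiate the two manifold conditions.

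First, differentiating $\tilde{\bm{U}} = \tilde{\bm{W}}^T$ along any curve immediately gives $\bm{T_U} = \bm{T_W}^T$, which accounts for the coupled form $(\bm{T_W}^T,\bm{T_W})$ of the tangent vector. Next, differentiating $\tilde{\bm{W}}\tilde{\bm{W}}^T = \bm{I}_n - \gamma\Sxi$ (whose right-hand side is constant in $\tilde{\bm{\theta}}$) yields $\bm{T_W}\tilde{\bm{W}}^T + \tilde{\bm{W}}\bm{T_W}^T = 0$, i.e.\ the matrix $\bm{T_W}\tilde{\bm{W}}^T$ must be skew-symmetric. So the lemma reduces to classifying all $n \times p$ matrices $\bm{T_W}$ satisfying this single constraint.

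To classify them I would decompose $\bm{T_W}$ with respect to the row space of $\tilde{\bm{W}}$. Since $\tilde{\bm{W}}$ has full row rank $n$ (because $\tilde{\bm{W}}\tilde{\bm{W}}^T = \bm{I}_n - \gamma\Sxi$ is positive definite by the eigenvalue assumption), the rows of $\tilde{\bm{W}}$ together with the rows of $\tilde{\bm{W}}_\perp$ span $\mathbb{R}^p$, so I can write uniquely $\bm{T_W} = \bm{A}\tilde{\bm{W}} + \bm{K}\tilde{\bm{W}}_\perp$ for some $n\times n$ matrix $\bm{A}$ and $n\times(p-n)$ matrix $\bm{K}$. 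Substituting into the skew-symmetry constraint and using $\tilde{\bm{W}}_\perp\tilde{\bm{W}}^T = 0$ collapses the condition to: $\bm{A}(\tilde{\bm{W}}\tilde{\bm{W}}^T)$ is skew-symmetric, while $\bm{K}$ is unconstrained.

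The remaining step is to re-parameterize the admissible $\bm{A}$'s. Setting $\bm{C} = \tilde{\bm{W}}\tilde{\bm{W}}^T$, I would show that the map $\bm{\Omega} \mapsto \bm{A} = \bm{C}\bm{\Omega}$ is a bijection between $n\times n$ skew-symmetric matrices $\bm{\Omega}$ and matrices $\bm{A}$ with $\bm{A}\bm{C}$ skew-symmetric. Forward direction: $(\bm{C}\bm{\Omega}\bm{C})^T = \bm{C}\bm{\Omega}^T\bm{C} = -\bm{C}\bm{\Omega}\bm{C}$ since $\bm{C}$ is symmetric. Reverse: given $\bm{A}$ with $\bm{A}\bm{C}$ skew, invert using invertibility of $\bm{C}$ to define $\bm{\Omega} := \bm{C}^{-1}\bm{A}$ and verify $\bm{\Omega}^T = \bm{A}^T\bm{C}^{-1} = -\bm{C}^{-1}\bm{A}\bm{C}\cdot\bm{C}^{-1} = -\bm{\Omega}$. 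Substituting $\bm{A} = \tilde{\bm{W}}\tilde{\bm{W}}^T\bm{\Omega}$ back into the decomposition yields exactly the claimed form.

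The main obstacle is the last step: recognizing that the constraint ``$\bm{A}\bm{C}$ skew'' is parameterized cleanly by $\bm{A} = \bm{C}\bm{\Omega}$ with $\bm{\Omega}$ skew, which requires the symmetry of $\bm{C}$ and its invertibility (ensured by the assumed eigenvalue bound on $\Sx$). Everything else is a straightforward linearization-and-decomposition argument, and a quick dimension check $n(n-1)/2 + n(p-n)$ from the free parameters in $\bm{\Omega}$ and $\bm{K}$ is consistent with the expected dimension of $\mathcal{M}$, providing a sanity check on completeness.
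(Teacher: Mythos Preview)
Your argument is correct and in fact more direct than the paper's. The paper proceeds by \emph{verifying} the ansatz: it first checks that any $\bm{t}=(\bm{T_W}^T,\bm{T_W})$ with $\bm{T_W}=\Wt\Wt^T\bm{\Omega}\Wt+\bm{K}\Wt_\perp$ satisfies the linearized manifold equations (so lies in the tangent space), and then closes the argument by a dimension count, showing that the degrees of freedom in $(\bm{\Omega},\bm{K})$ add up to $np-n(n+1)/2$, which matches the dimension of $\mathcal{M}$. You instead \emph{derive} the form: you linearize, obtain the single constraint that $\bm{T_W}\Wt^T$ be skew, decompose $\bm{T_W}$ against the row space of $\Wt$ and its orthogonal complement, and then parameterize the resulting condition ``$\bm{A}\bm{C}$ skew'' via the bijection $\bm{\Omega}\mapsto \bm{C}\bm{\Omega}$ with $\bm{C}=\Wt\Wt^T$. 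Your route has the advantage of being constructive (no ansatz to guess) and of yielding completeness automatically from the uniqueness of the row-space decomposition; the paper's route is slightly more elementary in that it never has to solve anything, only check and count. The one implicit step in your argument is that every solution of the linearized constraints is actually realized as the velocity of a curve in $\mathcal{M}$; this holds because the defining map $\Wt\mapsto\Wt\Wt^T$ is a submersion at full-rank $\Wt$, which is guaranteed here by the positive-definiteness of $\bm{I}_n-\gamma\Sxi$.
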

The proof is presented in Appendix \ref{secTangent}. To gain some intuitions about the above results, note that movements in the tangent space create coordinated changes $\Delta\bm{W} = \bm{T_W}$ and $\Delta\bm{U} = \bm{T_W}^T$ in the weight matrices such that their effects cancel out at the output (this is because $\Delta(\Wt\Ut) \approx \Wt\bm{T_W}^T + \bm{T_W}\Ut = 0$). Additionally, these changes generate rotations of representation vectors, which is expected due to the rotational symmetry of the problem discussed in the previous section. In this context, the term in $\bm{T_W}$ containing $\bm{\Omega}$ creates rotations within the $n$-dimensional subspace of the existing representations (i.e. the column-space of $\Ut$), while the term containing $\bm{K}$ is responsible for rotations that move the representations toward the $(p-n)$-dimensional subspace orthogonal to the current representations\footnote{Note the dimension of the tangent space is $np-\frac{n(n+1)}{2}$, which is equal to the degrees of freedom in $\bm{\Omega}$ and $\bm{K}$.}.

Using the inner product in Eq.\ref{innerprod}, the normal space can be defined as the subspace orthogonal to the tangent space. We will use the eigenspace of the Hessian corresponding to positive eigenvalues to describe the normal space (see Section \ref{secstochastic} and Appendix \ref{HeigenspaceM} -- note similarly, the tangent space is equivalent to the eigenspace of the Hessian with zero eigenvalue).
In the subsequent sections, we use the results in this section to project the gradient vector into these spaces.

\subsection{Learning dynamics near the manifold and emergence of drift} \label{secstochastic}
In this section, we will study the stochastic dynamics of learning in an online learning scenario in which the data are sampled from a fixed distribution. We assume sufficient training has been performed, and hence we are on or close to the manifold while being continuously nudged around due to the SGD noise. An arbitrary point $\bm{\theta}$ near the manifold can be represented as:
\begin{align}
\bm{\theta} = \tilde{\bm{\theta}} + \bm{\theta}_N,
\end{align}
where $\tilde{\bm{\theta}}$ is the closest point on the manifold to $\bm{\theta}$, and $\bm{\theta}_N \in N_{\tilde{\bm{\theta}}}\mathcal{M}$ is the deviation from the manifold in the normal space.
In the rest of this section, we will find update equations for $\bm{\theta}_N$ and $\tilde{\bm{\theta}}$, and describe their dynamics.
\subsubsection{Fluctuation in the normal space} \label{fluctuation}
We can find an update equation for $\bm{\theta}_N$ by projecting the two sides of the SGD update (Eq.\ref{sgd}) into the normal space.
For small learning rates and over long times, this equation can be approximated with a continuous-time stochastic differential equation (SDE):
\begin{align}
d\bm{\theta}_N &= -\bm{H}\bm{\theta}_N dt + \sqrt{\eta}\,\bm{C} d\bm{B}_t.
\label{SDEthetaN}
\end{align}
In the above, $\bm{H}_{k,l \,(\in [1,2np])} = \frac{\partial L}{\partial \bm{\theta}_k\partial \bm{\theta}_l}|_{\tilde{\bm{\theta}}}$ 
is the Hessian, $\bm{B}_t$ is the standard multi-dimensional Brownian motion (Wiener process), and $\bm{C} \approx \langle\tilde{\bm{g}}\tilde{\bm{g}}^T\rangle_x^{1/2}$, where $\tilde{\bm{g}}$ is the gradient at point $\tilde{\bm{\theta}}$ on the manifold (Eq.\ref{gM}).
In general, the gradient ($\bm{g}$) can be analytically calculated at any point by differentiating the loss function with respect to the weight matrices (see Eq.\ref{eqnGradient} in the Appendix). On the manifold this becomes:
 \begin{align}
     \tilde{\bm{g}}(\bm{x}) := \bm{g}(\bm{x};\tilde{\bm{\theta}}) = (\Wt^T\bm{Z_x},\bm{Z_x}\Wt),
     \label{gM}
\end{align}
where $\bm{Z_x} = \gamma (\bm{I}_n-\Sxi\bm{x}\bm{x}^T)$. Note that $\bm{C}$
in Eq.\ref{SDEthetaN} is in general a function of $\bm{\theta}_N$. But in deriving the SDE, we approximated it with its value on the manifold ($\bm{\theta}_N = 0$), 
which is justified for small deviations (see Section \ref{derivationSDE} for the derivation of the SDE).

Since the Hessian is positive semidefinite on the manifold, the process defined by the SDE in Eq.\ref{SDEthetaN} is a mean-reverting process known as the multi-dimensional Ornstein-Uhlenbeck (OU) stochastic process (\citet{gardiner1985handbook}). OU has a stationary solution with zero mean and a finite variance (Appendix \ref{SDEsolution}). Hence, we refer to the deviations in the normal space as \textit{fluctuation}.
We represent the fluctuation in a basis constructed from the eigenvectors of the Hessian with positive eigenvalues:
\begin{gather} \label{rhodef}
\bm{\rho}   = \bm{N}^T \bm{\theta}_{N}, \quad \text{where} \,\, \bm{N} = [\bm{n}_1|\,\bm{n}_2|\,...\,|\bm{n}_K], \,\, \\\text{and}\,\, \bm{H}\bm{n}_k = \lambda_k \bm{n}_k, \,\, \lambda_k > 0. \nonumber
\end{gather}
Here $\rho_k$ represents the deviation from the manifold along the Hessian eigenvector $\bm{n}_k$.
As shown in Appendix \ref{SDEsolution}, the stationary solution of $\bm{\rho}$ satisfies $\langle \rho_k\rangle = 0$, and has the covariance:
\begin{align}
    \langle \rho_k \rho_l \rangle
    = \frac{\eta}{\lambda_k+\lambda_l} \langle {\bm{n}}_k^T\tilde{\bm{g}}(\bm{x})\,  {\bm{n}}_l^T\tilde{\bm{g}}(\bm{x}) \rangle_x.
    \label{rhocovariance}
\end{align}
As expected, since $\tilde{\bm{g}}(\bm{x})$ is the driver of the fluctuation in the SDE (Eq.\ref{SDEthetaN}), 
the fluctuation covariance depends on the projection of $\tilde{\bm{g}}$ on $\{\bm{n}_k\}$. In Appendix \ref{HeigenspaceM} we calculate these eigenvectors, which consists of two sets. We show that $\tilde{\bm{g}}$ has no projection on one of these sets and hence, for the purpose of fluctuation, only a subspace of the normal space is relevant. This subspace is described in the next proposition.
\begin{proposition} \label{propH}\textnormal{(Hessian eigenspace)}
If $(\bm{v}_i,
s_i)$ are the eigenvector/eigenvalue pairs of $\Sx =\langle \bm{x}\bm{x}^T\rangle$ with $s_1 \geqslant .. \geqslant s_n > \gamma$, the Hessian eigenvectors $\bm{n}_k$ along which there is non-zero fluctuation (i.e. $\bm{n}_k^T\tilde{\bm{g}} \neq 0$) correspond to $(\bm{v}_i,\bm{v}_j)$ pairs for $i,j \in [1,n]$ via:
\begin{align} \label{nk}
\bm{n}_{k} \equiv \bm{n}_{ij} = (\Wt^T\bm{Z}_{ij},\bm{Z}_{ij}\Wt),
\end{align}
where $\bm{Z}_{ij} = C_{ij}(\kappa_{ij}\bm{v}_i\bm{v}_j^T + \bm{v}_j\bm{v}_i^T)$. Here, $\kappa_{ij} = \text{sgn}(i-j)(\sqrt{1+b^2}-b)$
for $b = [1/\gamma - (s_i+s_j)/(2s_is_j)]|s_i-s_j|$,
and $C_{ij} =[(2-\gamma/s_i-\gamma/s_j)(1+\kappa_{ij}^2)]^{-\frac{1}{2}}$ is a normalization constant ensuring $\bm{n}_k^T\bm{n}_k = 1$.
The corresponding Hessian eigenvalues are $\lambda_{ii}=2(s_i-\gamma)$ and $\lambda_{ij(i \neq j)} = 2s_i - \gamma(s_i/s_j+ \kappa_{ij})$. (See proof in Appendix \ref{proofpropH}.)
\end{proposition}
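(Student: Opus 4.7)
The key observation is that the sample gradient on the manifold, $\tilde{\bm{g}}(\bm{x})=(\Wt^T\bm{Z_x},\bm{Z_x}\Wt)$ from Eq.~\ref{gM}, lives in the $n^2$-dimensional subspace $\mathcal{A}=\{(\Wt^T\bm{Z},\bm{Z}\Wt):\bm{Z}\in\mathbb{R}^{n\times n}\}$. My plan is to show that $\mathcal{A}$ is invariant under the Hessian $\bm{H}$; then by symmetry of $\bm{H}$ so is $\mathcal{A}^\perp$, and since $\tilde{\bm{g}}\in\mathcal{A}$, every Hessian eigenvector lying in $\mathcal{A}^\perp$ is automatically orthogonal to $\tilde{\bm{g}}$. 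Thus only the eigenvectors within $\mathcal{A}$ can carry nonzero fluctuation, and diagonalizing the induced action on $\mathcal{A}$ in the $\Sx$-eigenbasis will decouple into independent $2\times 2$ eigenproblems indexed by pairs $(i,j)$.

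To realize this plan, I would first compute $\bm{H}$ by perturbing $\nabla_{\bm{W}}L=-(\bm{I}-\bm{W}\bm{U})\Sx\bm{U}^T+\gamma\bm{W}$ and $\nabla_{\bm{U}}L=-\bm{W}^T(\bm{I}-\bm{W}\bm{U})\Sx+\gamma\bm{U}$ to first order around $\tilde{\bm{\theta}}$, using $\bm{I}-\Wt\Wt^T=\gamma\Sxi$ and $\Ut=\Wt^T$. Substituting the ansatz $\delta\bm{W}=\bm{Z}\Wt$, $\delta\bm{U}=\Wt^T\bm{Z}$, the $\gamma\Sxi$ factors combine cleanly with $\Wt\Wt^T$ and I expect to obtain
\begin{equation*}
\bm{H}\bigl(\Wt^T\bm{Z},\,\bm{Z}\Wt\bigr)=\bigl(\Wt^T\mathcal{L}(\bm{Z}),\,\mathcal{L}(\bm{Z})\Wt\bigr),\qquad \mathcal{L}(\bm{Z})=2\bm{Z}\Sx-\gamma\Sxi\bm{Z}\Sx-\gamma\bm{Z}^T.
\end{equation*}
Because $\Wt$ has full row rank on the manifold, the eigenvalue equation reduces to $\mathcal{L}(\bm{Z})=\lambda\bm{Z}$. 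Expanding $\bm{Z}=\sum_{ij}z_{ij}\bm{v}_i\bm{v}_j^T$, the maps $\Sx,\Sxi$ act diagonally and $\bm{Z}^T$ swaps indices, so $\mathcal{L}$ couples only $z_{ij}$ with $z_{ji}$. For $i=j$ this is a $1\times 1$ problem giving $\lambda_{ii}=2(s_i-\gamma)$; for $i\neq j$ a $2\times 2$ problem whose trace and determinant I would check are positive under $s_i,s_j>\gamma$. Writing an eigenvector as $(z_{ij},z_{ji})=(\kappa,1)$ reduces the characteristic equation to $\kappa^2+[(\alpha_{ij}-\alpha_{ji})/\gamma]\kappa-1=0$ with $\alpha_{ij}=2s_j-\gamma s_j/s_i$; the two roots correspond to the labels $(i,j)$ and $(j,i)$. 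Factoring $(\alpha_{ij}-\alpha_{ji})/\gamma=-2b\,\mathrm{sgn}(s_i-s_j)$ and using the convention $s_1\geq\cdots\geq s_n$ identifies the small-magnitude root with $\kappa_{ij}=\mathrm{sgn}(i-j)(\sqrt{1+b^2}-b)$, and substituting back via $\lambda=\alpha_{ji}-\gamma\kappa$ yields $\lambda_{ij}=2s_i-\gamma(s_i/s_j+\kappa_{ij})$.

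The normalization $C_{ij}$ then follows directly from the parameter-space inner product of Eq.~\ref{innerprod}: using $\Wt\Wt^T=\bm{I}-\gamma\Sxi$ one obtains $\|(\Wt^T\bm{Z},\bm{Z}\Wt)\|^2=\sum_{ij}(2-\gamma/s_i-\gamma/s_j)\,z_{ij}^2$ in the $\bm{v}$ basis, from which $C_{ij}^2(1+\kappa_{ij}^2)(2-\gamma/s_i-\gamma/s_j)=1$ gives the stated expression. The step I anticipate will require the most care is the sign bookkeeping in the middle paragraph: tracking signs through the $\kappa$-quadratic so that its two roots are correctly assigned to the ordered labels $(i,j)$ vs.\ $(j,i)$ and branches chosen so the unified formula $\kappa_{ij}=\mathrm{sgn}(i-j)(\sqrt{1+b^2}-b)$ emerges in both orderings; once that is settled, the remainder is routine linear algebra.
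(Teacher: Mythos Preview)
Your proposal is correct and follows essentially the same approach as the paper: both reduce the problem to the linear eigenproblem $\mathcal{L}(\bm{Z})=2\bm{Z}\Sx-\gamma\Sxi\bm{Z}\Sx-\gamma\bm{Z}^T=\lambda\bm{Z}$ on $n\times n$ matrices (the paper's ``Set~2'' ansatz), decouple it into $2\times2$ blocks on $\mathrm{span}\{\bm{v}_i\bm{v}_j^T,\bm{v}_j\bm{v}_i^T\}$, and extract $\kappa_{ij}$, $\lambda_{ij}$, and $C_{ij}$ exactly as stated. The one place you streamline the paper's argument is in dispensing with the complement: the paper explicitly constructs a second family of Hessian eigenvectors (``Set~1'', with $\bm{N_U}=-\bm{N_W}^T$ and $\lambda=2\gamma$) and then checks by a trace computation that $\tilde{\bm{g}}$ is orthogonal to all of them, whereas your $\bm{H}$-invariance of $\mathcal{A}$ together with $\tilde{\bm{g}}\in\mathcal{A}$ gives this for free without ever writing Set~1 down.
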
 
The above proposition relates the relevant directions in the normal space to the corresponding $(ij)$ indices of the input principal directions (this makes $k$ a composite index). Subsequently, the components of the fluctuation covariance can be written as $\langle \rho_k \rho_l \rangle \equiv \langle \rho_{ij} \rho_{pq} \rangle$ (see Appendix \ref{calcCompFluc} for the components as functions of the input distribution). The remark below provides an interpretation of movements along different $\bm{n}_k$ in the representation space. 

\begin{remark}
\label{remarkfluc} (Fluctuation of the representations):
When on the manifold, the representations of all unit-length stimuli form an n-dimensional ellipsoid embedded in $\mathbb{R}^p$. The main axes of this ellipsoid are $\tilde{\bm{h}}_i = \Wt^T\bm{v}_i$, with norms $|\tilde{\bm{h}}_i| = (1-\gamma/s_i)^{1/2}$ ($i \in [1,n]$). 
Using the above proposition,
we can determine how deviations along different $\bm{n}_k$ lead to different deformation modes for the ellipsoid.
Specifically, the deviation along $\bm{n}_{ii}$ changes $\tilde{\bm{h}}_i$ radially (changing its norm), while leaving other (orthogonal) axes unchanged.
Similarly, deviation along $\bm{n}_{ij}$ for $i \neq j$ moves $\tilde{\bm{h}}_i$ and $\tilde{\bm{h}}_j$ toward or away from each other, and leaves other axes intact.
The latter shows that the fluctuations can change not only the lengths but also the pairwise angles of the representation vectors. Note however that the average of these changes is zero over time.
\end{remark}

We take the variance of the norm as a measure of the fluctuation for the representations, i.e.
\begin{align}
    \sigma_{i}^2 := \text{var}( |\bm{h}_i(t)|) = \frac{1}{2}\langle \rho_{ii}^2\rangle = \frac{\eta\gamma^2}{4s_i}(\frac{\langle x_i^4\rangle_x}{s_i^2} - 1),
    \label{hvar}
\end{align}
where $x_{i} := \bm{v}_i^T\bm{x}$ (see Section \ref{flucrepnorm} for derivation).
As we show later, this has an excellent match to numerically measured fluctuations.

\subsubsection{Tangential updates} \label{tangupdate}
Over a learning update, the displacement in the parameter space ($\Delta\bm{\theta}$) leads to a corresponding projected movement on the manifold ($\Delta \tilde{\bm{\theta}}$). For small learning rates and by avoiding the curvature effect, we have: 
\begin{align}
\Delta \tilde{\bm{\theta}} = -\eta \bm{g}_{\scriptscriptstyle T}(\bm{x};\tilde{\bm{\theta}}\!+\! \bm{N}\!\bm{\rho}),
\label{sgdT}
\end{align}
where $\bm{g}_{\scriptscriptstyle T} = \Pi_T(\bm{g})$ is the projection of the gradient vector onto the tangent space.
In the next theorem, we calculate this projection and find its action on the representations.
\begin{restatable}{theorem}{tantheorem}
\label{theoremphi}
\textnormal{(Tangential update)} For a normal deviation $\bm{N}\!\bm{\rho}\!=\!\sum_k \rho_k \bm{n}_k$ from point $\tilde{\bm{\theta}} = (\Wt^T,\Wt)$ on the manifold, the tangential projection of the gradient is:
\begin{gather}
    \bm{g}_{\scriptscriptstyle T}(\bm{x};\tilde{\bm{\theta}}\!+\! \bm{N}\!\bm{\rho}) = ( \bm{G}_{{\bm{U}}_T}, \bm{G}_{{\bm{U}}_T}^T), \quad\text{for}\label{gT} \\ \bm{G}_{{\bm{U}}_T}\!=\!\Wt^T (\Wt\Wt^T)^{-\frac{1}{2}} (\sum_k{\rho_k}\G_{k}^{:,:})(\Wt\Wt^T)^{\frac{1}{2}}\!+\!\mathcal{O}(|\bm{\rho}|^2), \nonumber
\end{gather}
where $\G$ is a rank-3 tensor with components
 $\G_{k}^{s,r} \equiv \G_{ij}^{s,r}$ that are defined as:
\begin{multline}
     \G_{ij}^{s,r}(\bm{x}) =\frac{\gamma C_{ij}\sqrt{\omega_r\omega_s}}{2(\omega_r + \omega_s)} [x_j(S^{ij}_{sj}x_s \delta_i^r - S^{ij}_{rj}x_r \delta_i^s)
    \\+ x_i(S^{ij}_{is}x_s \delta_j^r - S^{ij}_{ir}x_r \delta_j^s)],
\label{G}
\end{multline}
where $\delta_s^r$ is the Kronecker delta function, $\omega_i := 1 - \gamma/s_i$, and $S_{rs}^{ij} := \kappa_{ij}/s_r + 1/s_s$ ($r,s,i,j \in [1,n]$).
The action of the tangential gradient in the representation space is a small rigid-body rotation around the origin. In this rotation, the representation vectors stay within the column-space of $\Wt^T$, which itself stays fixed during the tangential updates. Additionally, the angular displacement of the representation $\bm{h}_s (= \bm{U}\bm{v}_s)$ toward $\bm{h}_r (= \bm{U}\bm{v}_r)$ is:
\begin{align}
    \Delta {\varphi}_{sr}(\bm{x};\tilde{\bm{\theta}}\!+\! \bm{N}\!\bm{\rho})
= \eta\sum_{k=1}^{K}{\rho_k}\G_{k}^{s,r}(\bm{x}) + \mathcal{O}(|\bm{\rho}|^2).
\label{dphi}
\end{align}
\end{restatable}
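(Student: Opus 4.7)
The plan is to compute the tangential projection of the sample gradient evaluated at a nearby off-manifold point by Taylor expansion in $\bm{\rho}$. Writing
\[
\bm{g}(\bm{x};\tilde{\bm{\theta}}+\bm{N}\bm{\rho}) = \tilde{\bm{g}}(\bm{x}) + \bm{H}(\bm{x};\tilde{\bm{\theta}})\,\bm{N}\bm{\rho} + \mathcal{O}(|\bm{\rho}|^2),
\]
the first step is the key preliminary observation that the zeroth-order piece $\tilde{\bm{g}}(\bm{x})$ is already orthogonal to $T_{\tilde{\bm{\theta}}}\mathcal{M}$ for every sample $\bm{x}$. Using Eq.\ref{innerprod}, pairing $\tilde{\bm{g}}=(\Wt^T\bm{Z_x},\bm{Z_x}\Wt)$ with a tangent vector $\bm{t}=(\bm{T_W}^T,\bm{T_W})$ from Lemma \ref{lemmatangent} produces $\tr\bigl(\Wt^T(\bm{Z_x}+\bm{Z_x}^T)\bm{T_W}\bigr)$. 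The $\bm{K}\Wt_{\perp}$ part vanishes because $\Wt_{\perp}\Wt^T=0$ by construction; the $\Wt\Wt^T\bm{\Omega}\Wt$ part vanishes because, using $\Wt\Wt^T=\bm{I}_n-\gamma\Sxi$, one obtains the trace of a symmetric matrix against a skew-symmetric $\bm{\Omega}$. Hence the tangential part of the gradient is purely linear in $\bm{\rho}$ to leading order, which also justifies dropping the shift from the base point of $\bm{H}$.

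Next, I would compute $\bm{H}(\bm{x})\bm{n}_k$ in closed form for each Hessian eigenvector $\bm{n}_k=(\Wt^T\bm{Z}_{ij},\bm{Z}_{ij}\Wt)$ from Proposition \ref{propH}, by differentiating the sample gradient expressions for $\nabla_{\bm{W}}l$ and $\nabla_{\bm{U}}l$ at $\tilde{\bm{\theta}}$. Projecting onto the tangent basis parametrized by $(\bm{\Omega},\bm{K})$ in Lemma \ref{lemmatangent} and summing the contributions $\rho_k$ yields $\bm{g}_{\scriptscriptstyle T}$ in the block form $(\bm{G}_{\bm{U}_T},\bm{G}_{\bm{U}_T}^T)$ that respects the symmetry $\Ut=\Wt^T$. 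To recover the explicit factored expression, I would diagonalize $\Wt$ as $\Wt^T=\sum_i\sqrt{\omega_i}\,\bm{u}_i\bm{v}_i^T$ with $\{\bm{u}_i\}\subset\mathbb{R}^p$ and $\{\bm{v}_i\}\subset\mathbb{R}^n$ orthonormal; in this basis, $\Wt^T(\Wt\Wt^T)^{-1/2}=\sum_i\bm{u}_i\bm{v}_i^T$ is the isometric partial embedding into the column-space of $\Wt^T$, while $(\Wt\Wt^T)^{1/2}=\sum_i\sqrt{\omega_i}\bm{v}_i\bm{v}_i^T$ carries the representation-norm weights. Regrouping the indices, the sums collapse to the rank-3 tensor $\G_{ij}^{s,r}$ with the advertised combination of $C_{ij}$, $\kappa_{ij}$, and $S_{rs}^{ij}$; only $(ij)$ with $i,j\in[1,n]$ survive because the $\bm{K}$-family of normal directions carries zero projection of $\tilde{\bm{g}}$, as already noted in Section \ref{fluctuation}.

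For the geometric statements, read off $\Delta\bm{U}=-\eta\bm{G}_{\bm{U}_T}$. Since $\bm{G}_{\bm{U}_T}$ left-multiplies by $\Wt^T(\Wt\Wt^T)^{-1/2}=\sum_i\bm{u}_i\bm{v}_i^T$, every column of $\Delta\bm{U}$ lies in $\mathrm{span}\{\bm{u}_i\}$, the column-space of $\Wt^T$, proving invariance of this subspace and in turn that $\Wt^T$ itself stays fixed under tangential updates (to the stated order). Rigid-body rotation then amounts to showing $\bm{h}_s^T\Delta\bm{h}_r+\bm{h}_r^T\Delta\bm{h}_s=0$, which in the $\{\bm{u}_i,\bm{v}_i\}$ basis reduces to the antisymmetry $\G_{ij}^{s,r}=-\G_{ij}^{r,s}$ inherited from the skew-symmetric $\bm{\Omega}$-generator of the tangent space; this can be verified directly from Eq.\ref{G}. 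Finally, the angular displacement formula follows from $\Delta\varphi_{sr}=\hat{\bm{h}}_r^T\Delta\bm{h}_s/|\bm{h}_s|$ applied to $\Delta\bm{h}_s=-\eta\bm{G}_{\bm{U}_T}\bm{v}_s$, with the $(\Wt\Wt^T)^{\pm 1/2}$ factors cancelling the norms $\sqrt{\omega_s\omega_r}$ from $|\bm{h}_s|$ and $\hat{\bm{h}}_r$ to reproduce Eq.\ref{dphi}. The main obstacle will be the index bookkeeping in the middle step: carefully tracking how the $C_{ij}$, $\kappa_{ij}$, and $\omega_i$ factors combine as $\bm{H}(\bm{x})\bm{n}_{ij}$ is contracted against each tangent basis vector, and recognizing the resulting expression as the single clean tensor in Eq.\ref{G} rather than a sum of disparate terms.
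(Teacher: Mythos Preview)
Your approach is essentially the paper's: both linearize the sample gradient in $\bm{\rho}$ around $\tilde{\bm{\theta}}$ (the paper by directly substituting $\bm{W}=(\bm{I}+\rho_k\bm{Z}_k)\Wt$, $\bm{U}=\Wt^T(\bm{I}+\rho_k\bm{Z}_k)$ into the gradient formulas; you via the sample Hessian), observe that the zeroth-order and the symmetric first-order pieces have zero pairing with the $\bm{\Omega}$-part of the tangent basis while the $\bm{K}\Wt_\perp$-part vanishes outright because $\Wt_\perp\Wt^T=0$, and then read off $\G$ by projecting onto the explicit basis $\bm{t}_{rs}$; the geometric claims follow identically from the skew-symmetry of $\G_k^{:,:}$. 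One small correction: the theorem asserts that the \emph{column-space} of $\Wt^T$ stays fixed under tangential updates, not $\Wt^T$ itself---$\Wt$ does move on the manifold, but since $\bm{G}_{\bm{U}_T}$ factors through $\Wt^T$ on the left, the row-space of $\Wt$ (equivalently the column-space of $\Wt^T$) is preserved step to step.
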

In deriving the above, the gradient was approximated to the first order of the deviation from the manifold, and subsequently the tangential projector operator was applied. Here $\G$ is a tensor that, for a given direction in the normal space (corresponding to $k \equiv ij$), returns a rotation generator via the skew-symmetric matrix $\G_{k}^{:,:}$ ($G^{s,r}_k = -G^{r,s}_k$). Note also that during the gradient updates only a subspace of the tangent space is explored corresponding to the term with $\bm{\Omega}$ in Lemma \ref{lemmatangent} (see Appendix \ref{prooftheoremphi}).
\subsubsection{Drift as a diffusion process} \label{diffprocess}
As we saw in Section \ref{fluctuation}, the mean-reverting property of the gradient confines the normal deviations near the manifold (Fig.\ref{figschem} - Left). The movements in the tangent space, however, face no such mean-reverting property and hence can diffuse freely on the manifold in a random-walk fashion. As per Theorem \ref{theoremphi}, the tangential component of the gradient is proportional to the deviation from the manifold ($\bm{\rho}$), and it vanishes on the manifold (see the schematics in the middle and right panels of Fig.\ref{figschem}). 
We can find an effective diffusion process for the movements on the manifold, which because of the above correspondence, is expected to depend on the statistics of the normal deviations.
Specifically,
over large timescales,
Eq.\ref{sgdT} can be approximated as the following continuous-time SDE, describing the evolution of the the parameter vector on the manifold:
\begin{align}
d \tilde{\bm{\theta}} =  \sqrt{2\eta^{-1}\bm{D}_{\theta}}\,d\bm{B}_t.
\label{SDEtheta}
\end{align}
In the above, $\bm{D}_{\theta} \!=\!(1/2)\langle \Delta \tilde{\bm{\theta}} \Delta \tilde{\bm{\theta}}^T\rangle_{x,\rho}\!=(\eta^2/2)\langle \bm{g}_{\scriptscriptstyle T} \bm{g}_{\scriptscriptstyle T}^T\rangle_{x,\rho}$
is the diffusion tensor for the parameters (measured over a training step), and $\bm{B}_t$ is multi-dimensional Brownian motion.
\begin{figure}
\vskip 0.1in
\centering
\includegraphics[width=\columnwidth]{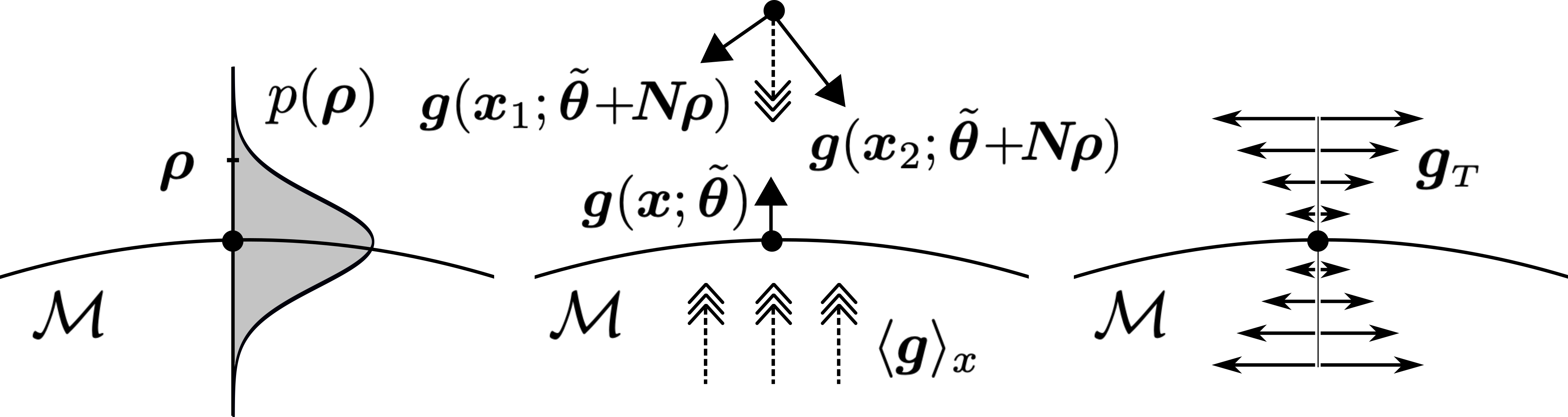}
\vskip -0.05in
\caption{Schematics showing (left) the probability distribution of the fluctuations outside the manifold, (middle) the gradient vectors on and near the manifold, and (right) the tangential component of the gradient.}
\label{figschem}
\vskip -0.15in
\end{figure}
The above diffusion process is manifested as a rotational diffusion in the representation space. According to Theorem \ref{theoremphi}, the rotations happen within the $n$-dimensional space of representations and hence to fully characterize the diffusion it's sufficient to measure the pairwise diffusion rates between the axes of the ellipsoid $\{{\bm{h}}_s\}$ (${s \in [1,n]}$). Similar to the parameter diffusion above, we define ${D}_{sr}$ to be the diffusion rate between the two representations ${\bm{h}}_s$ and ${\bm{h}}_r$ based on the mean squared of the pairwise angular displacement, i.e.:
\begin{align} \label{Dsr}
    {D}_{sr} &:= \frac{1}{2} \langle \Delta{\varphi}_{sr}^2 \rangle_{x,\rho}   =\frac{\eta^2}{2} \sum_{k,l=1}^{K} \langle \rho_k \rho_l \rangle \,\Gbar_{k,l}^{s,r},
\end{align}
where we defined $\Gbar_{k,l}^{s,r} := \langle \G_{k}^{s,r}\G_{l}^{s,r} \rangle_x$.
The right-hand side results from replacing $\Delta{\varphi}_{sr}$ from Eq.\ref{dphi} and taking the average.
The total diffusion for the representation $\bm{h}_s$ can be derived by summing up the diffusion rates along different directions: $D_{s} = \sum_{r=1, r\neq s}^{n} D_{sr}$.

Eq.\ref{Dsr} indicates that the diffusion rate between two representation vectors is an aggregation of terms resulting from deviations in different directions in the normal space. In this equation, $\eta^2\Gbar_{k,l}$ can be thought of as the diffusion per direction (or more accurately direction pair $k,l$) in the normal space, and $\langle \rho_k \rho_l \rangle$ is the covariance of the corresponding fluctuations. As both $\Gbar$ and $\langle \bm{\rho}\bm{\rho}^T \rangle$ depend on the second and the fourth moments of the stimuli distribution, the diffusion is also a function of the those moments of the input distribution.

\subsection{Numerical simulations}
Alongside the analytical derivations, we also performed numerical simulations in which we measured the drift in a neural network undergoing continual SGD training. We first numerically validated the equations of the manifold by verifying that after enough training steps the network approaches the theoretical manifold. Next, to measure the drift, we initialized multiple ($>10^4$) realizations of the network, all starting from a fixed point on the manifold but undergoing different random SGD samplings during the training. Following a transitory phase, we studied the over-time trajectories of the hidden layer activation for different trial stimuli. This was done by measuring the fluctuation (variance) of the representation norm, and the angular displacements of the representations. The diffusion coefficient was estimated as half of a linear fit slope to the mean squared angular displacements aggregated from all the realizations.

\section{Drift Under Gaussian Stimuli} \label{whitenoise}
In this section, we present complete analytical results for a case where the stimuli are drawn randomly and independently from a standard n-dimensional Gaussian distribution, i.e. $x_i \sim \mathcal{N}(0,1)$. Since the input covariance is the identity ($\langle x_ix_j\rangle = \delta_i^j$), its eigenvectors $\{\bm{v}_i\}$ form an orthonormal basis for $\mathbb{R}^n$, all corresponding to  eigenvalue $s_i = 1$.
Using Proposition \ref{propH}, the relevant directions in the normal space consist of three sets corresponding to $\bm{Z}_{ii}\!=\!  \bm{v}_i\bm{v}_i^T/(\sqrt{2\omega})$, $\bm{Z}_{ij(i>j)}\!=\!(\bm{v}_i\bm{v}_j^T + \bm{v}_j\bm{v}_i^T)/(2\sqrt{\omega})$, and $\bm{Z}_{ij(i<j)}\!=\!(-\bm{v}_i\bm{v}_j^T + \bm{v}_j\bm{v}_i^T)/(2\sqrt{\omega})$, where $\omega=1-\gamma$, and $i,j \in [1,n]$. For calculating the above, we derived the coefficients $\kappa_{ij(i>j)}=1$, $\kappa_{ii} = 0$, and $\kappa_{ij(i<j)}=-1$ in the proposition.
The corresponding Hessian eigenvalues are $\lambda_{ij(i \geqslant j)} = 2(1-\gamma)$ and $\lambda_{ij(i < j)} = 2-\gamma$.

The components of the fluctuation matrix result from projecting the gradient onto the above eigenvectors (Eq.\ref{rhocovariance}) and using the identity $\langle x_ix_jx_px_q\rangle = \delta_{ip}^{jq} + \delta_{ij}^{pq} + \delta_{ij}^{qp}$ for the Gaussian input (see Appendix \ref{calcCompFluc}). This leads to:
\begin{gather}
\langle \rho_{ij}^2\rangle = \eta\gamma^2, \, \text{for}\,\,i\geqslant j,
\label{rhoID}
\end{gather}
with the rest of the components being zero. Following Eq.\ref{hvar}, the fluctuation in the representation norm of an arbitrary unit-length stimulus becomes:
\begin{align}
    \sigma_{s}^2  = \frac{\eta\gamma^2}{2}.
\end{align}
To find the diffusion coefficients, we first calculate the components of the $\G$ tensor from Eq.\ref{G}:
\begin{gather} \label{GtermsID}
\G_{ii}^{s,r} = \frac{\gamma}{2\sqrt{2\omega}}x_i(x_s\delta_i^r - x_r\delta_i^s),\quad \G_{ij(i<j)}^{s,r} = 0,\\ 
\G_{ij(i>j)}^{s,r} = \frac{\gamma}{4\sqrt{\omega}}[x_j(x_s\delta_i^r - x_r\delta_i^s) + x_i(x_s\delta_j^r - x_r\delta_j^s )]. \nonumber
\end{gather}
Subsequently, the coefficients $\Gbar^{s,r}_{ij,pq}$ can be found by taking the averages of the products of the $\G$ terms from Eq.\ref{GtermsID}, leading to:
\begin{align}
    &\Gbar^{s,r}_{ii,ii} =\frac{\gamma^2}{8\omega}(\delta_i^r + \delta_i^s),\, \Gbar^{s,r}_{ii,jj(i\neq j)} =\frac{-\gamma^2}{8\omega}(\delta_{ij}^{rs} + \delta_{ij}^{sr}), \nonumber \\  
    &\Gbar^{s,r}_{ij,ij(i>j)} \!=\!\frac{\gamma^2}{16\omega}(\delta_i^r\!+\!\delta_i^s\!+\! \delta_j^r\!+\!\delta_j^s\!+\! 2\delta_{ij}^{rs}+2\delta_{ij}^{sr}),\!
\label{GbarID}
\end{align}
with the rest of the components being zero. Finally, replacing the above in the summation of Eq.\ref{Dsr} results in the diffusion rate between the representations of two arbitrary stimuli denoted by $s$ and $r (\neq s)$:
\begin{align}
    D_{sr} &= \frac{\eta^3\gamma^2}{2}( \sum_{i=1}^{n}\Gbar^{s,r}_{ii,ii}  + \sum_{i,j=1, i>j}^{n}\Gbar^{s,r}_{ij,ij}) \nonumber\\ &= \frac{\eta^3 \gamma^4}{2(1-\gamma)}(\frac{1}{4} + \frac{n}{8}) = \frac{1}{16}\frac{\eta^3 \gamma^4}{1-\gamma}(n+2).
\end{align}
Note the summation was performed only over indices where the fluctuation covariance was non-zero.
Since the diffusion is isotropic, the total diffusion for representation $\tilde{\bm{h}}_s$ becomes:
\begin{align}
    &D_{s} = (n-1)D_{sr} = \frac{1}{16}\frac{\eta^3 \gamma^4}{1-\gamma} (n-1)(n+2).
\end{align}
\section{Drift Under a Frequent Stimulus} \label{freqstimulus}
As demonstrated in the previous section,  SGD-induced drift could occur even with isotropic background stimuli. In this section, we will study how the presence of a frequent stimulus in the environment influences the drift.
To do this, we consider a case where in addition to the background Gaussian stimuli, there is a relatively more frequent stimulus, $\bm{a}$, that is presented with probability $\alpha$ during training, i.e.:
\begin{align}\label{xfreq}
    \bm{x} = \, \left\{ \begin{array}{lc}
\bm{a} & \text{Pr} = \alpha \\
\mathcal{N}(\bm{0},\bm{I}_n) & \text{Pr} = 1-\alpha
\end{array} \right.
\end{align}
Note that the previous case in Section \ref{whitenoise} is equivalent to $\alpha=0$, and here we study the drift as a function of $\alpha$. Without loss of generality, we take $\bm{a}$ to be along the first axis, and for simplicity we assume it has unit length. The second and fourth moments of the input distribution become $\langle x_ix_j\rangle = \alpha \delta_{ij}^{11} + (1-\alpha)\delta_i^j$ and $\langle x_ix_jx_px_q \rangle = \alpha \delta_{ijpq}^{1111} + (1-\alpha)(\delta_{ip}^{jq} + \delta_{ij}^{pq} + \delta_{ij}^{qp})$, respectively. The eigenvectors of $\Sx$ still form an orthonormal basis for the input, but with eigenvalues $s_a = s_1 = 1$, and $s_b = 1-\alpha$ for $b \in [2,n]$ (note we use indices $a$ and $b$ to refer to the frequent and orthogonal background stimuli, respectively).
The fluctuations of the representation norm for the frequent stimulus $\bm{a}$ and a unit length background stimulus $\bm{b} \, (\perp\!\bm{a})$ can be found from Eq.\ref{hvar}:
\begin{align}
    &\sigma_{a}^2  
    = \frac{\eta\gamma^2}{2}(1-\alpha), \quad
    \sigma_{b}^2     
    =\frac{\eta\gamma^2}{4}\frac{2+\alpha}{(1-\alpha)^2}. 
\end{align}
It is easy to check that $\sigma_{a}^2 \leqslant \sigma_{b}^2$ irrespective of $\alpha$, which suggests that the fluctuation of the frequent stimulus representation is smaller than that of a background stimulus.

To fully characterize the diffusion in the representation space, we take advantage of the symmetry within the space of background stimuli. This means we only need to find two pairwise diffusion coefficients: $D_{ab}$ $(= D_{ba})$, which measures the diffusion between the frequent and a background stimulus, and $D_{bc}$ $(= D_{cb})$, which measures the pairwise diffusion between two orthogonal background stimuli
(without loss of generality, we take $b=2$ and $c=3$). To calculate these coefficients, we need to perform the summation in Eq.\ref{Dsr} which is taken over indices $(k,l)$ or equivalently $(ij,pq)$ for $i,j,p,q \in [1,n]$. For large $n$, many of the terms in the summation can be ignored as shown in Appendix \ref{approxsum}. Here we present results for two limiting cases.

We first study the regime of small $\alpha$, which can be considered as a perturbation to the previous Gaussian stimuli case.
As a first order correction to the summation, it is sufficient to sum over indices for which the unperturbed case had non-zero $\langle \rho_{ij}\rho_{pq} \rangle$ or $\Gbar^{s,r}$ terms. This limits the summation to indices  $(ij,ij)_{i>j}$ for which either $i$ or $j$ are equal to $r$ or $s$ (see details in Appendix \ref{approxsum}). If we take $d$ $(>c)$ to be an additional index within the background subspace (e.g. $d=4$), the diffusion summations for the two coefficients simplify to:
\begin{align} \label{Dsrsmallalpha}
D_{ba} &\approx \frac{n\eta^2}{2}(\langle\rho_{db}^2\rangle\Gbar_{db,db}^{a,b} + \langle \rho_{da}^2\rangle \Gbar_{da,da}^{a,b}) \quad (\alpha \ll 1, n\gg 1)
\nonumber \\
&\approx \frac{n\eta^3\gamma^4}{32}[(1+2\alpha)(1) + (1+\frac{\alpha}{2})(1+\alpha)] \nonumber \\ 
&\approx \frac{n\eta^3\gamma^4}{16} (1 + \frac{7\alpha}{4})\nonumber \\
D_{bc} &\approx \frac{n\eta^2}{2}(\langle \rho_{db}^2 \rangle \Gbar_{db,db}^{c,b} + \langle \rho_{dc}^2 \rangle \Gbar_{dc,dc}^{c,b}) \nonumber \\
&\approx \frac{n\eta^3\gamma^4}{32}[(1+2\alpha)(1+\alpha) + (1+2\alpha)(1+\alpha)] \nonumber \\
&\approx \frac{n\eta^3\gamma^4}{16} (1 + 3\alpha),
\end{align} 
where we replaced the following quantities calculated in Appendix \ref{Gcalcfreqstim} under small $\alpha$ and $\gamma$:
\begin{align}
&\langle \rho_{db}^2 \rangle\!=\!\langle \rho_{dc}^2 \rangle = \eta\gamma^2(1+2\alpha), \, \langle \rho_{da}^2\rangle\!=\!\eta\gamma^2(1 + \frac{\alpha}{2}) \\ 
&\Gbar_{db,db}^{a,b} = \frac{\gamma^2}{16}, \quad \Gbar_{da,da}^{a,b} = 
\Gbar_{db,db}^{c,b} = \Gbar_{dc,dc}^{c,b} = \frac{\gamma^2}{16}(1+\alpha). \nonumber
\end{align}
The above shows $D_{bc} - D_{ba} \approx 5n\eta^3\gamma^4\alpha/64 \geq 0$.
This difference can be attributed to both the fluctuation and the $\Gbar$ terms (for example, one can check from Eq.\ref{Dsrsmallalpha} that if the fluctuation terms are replaced with their values at $\alpha=0$, the difference still remains but to a lesser degree).

Total diffusion for stimuli $a$ and $b$ becomes:
\begin{align}
&D_a = (n-1)D_{ab} \approx \frac{n^2\eta^3\gamma^4}{16} (1 + \frac{7\alpha}{4})\quad (\alpha \ll 1, n\gg 1) \nonumber \\
&D_b = D_{ba} + (n-2)D_{bc} \approx \frac{n^2\eta^3\gamma^4}{16} (1 + 3\alpha), 
\end{align}
which shows $D_b \geq D_a$. 

In Appendix \ref{CalcfreqstimLargeN}, we also perform derivations for large $n$ and $\alpha \gg \gamma$ (specifically, with respect to the input eigenvalues, we assume $s_a,s_b,s_a-s_b \gg \gamma$). The results of those calculations are:
\begin{align} \label{Ds_largealpha}
D_{a} &\approx \frac{n^2\eta^3\gamma^4\langle x_a^2x_c^2\rangle\langle x_b^2x_c^2\rangle}{128s_b^5}[1 + 3\frac{s_b}{s_a} + 2(\frac{s_b}{s_a})^2  + \frac{4(\frac{s_b}{s_a})^2}{1+\frac{s_b}{s_a}}] \nonumber \\
&= \frac{n^2\eta^3\gamma^4}{16(1-\alpha)^3} [\frac{1-\frac{7}{4}\alpha+\frac{15}{16}\alpha^2-\frac{1}{8}\alpha^3}{1-\frac{\alpha}{2}}], \nonumber
\\
\nonumber\\
D_{b} &\approx \frac{n^2\eta^3\gamma^4\langle x_b^2x_c^2 \rangle^2}{16s_b^5}
= \frac{n^2\eta^3\gamma^4}{16(1-\alpha)^3} \,\,\quad (\alpha \gg \gamma, n \gg 1)
\end{align}
where in the equalities we replaced $s_a=1$, $s_b=1-\alpha$, and $\langle x_a^2x_c^2\rangle = \langle x_b^2x_c^2\rangle = 1-\alpha$. The term inside the second line brackets in $D_a$ is always smaller than one. Hence, we again have $D_b \geq D_a$.

In Figure \ref{figfreq}, we plot the diffusion and fluctuations coefficients as a function of $\alpha$ for stimuli $a$ and $b$.
We see that, consistent with the above results, the frequent stimulus drifts at a relatively slower rate and has a smaller fluctuation, irrespective of $\alpha$. Additionally, there is an excellent match between the theoretical and the simulations results for both the fluctuations and the diffusion coefficients. 
Finally, in the bottom panel of the same figure, we plotted the trajectories of the representations over time for $n=p=3$. The lower fluctuation and diffusion rates for the more frequent stimulus can be visually observed from the smaller point cloud for this stimulus.
\begin{figure}
\centering
\includegraphics[width=\columnwidth]{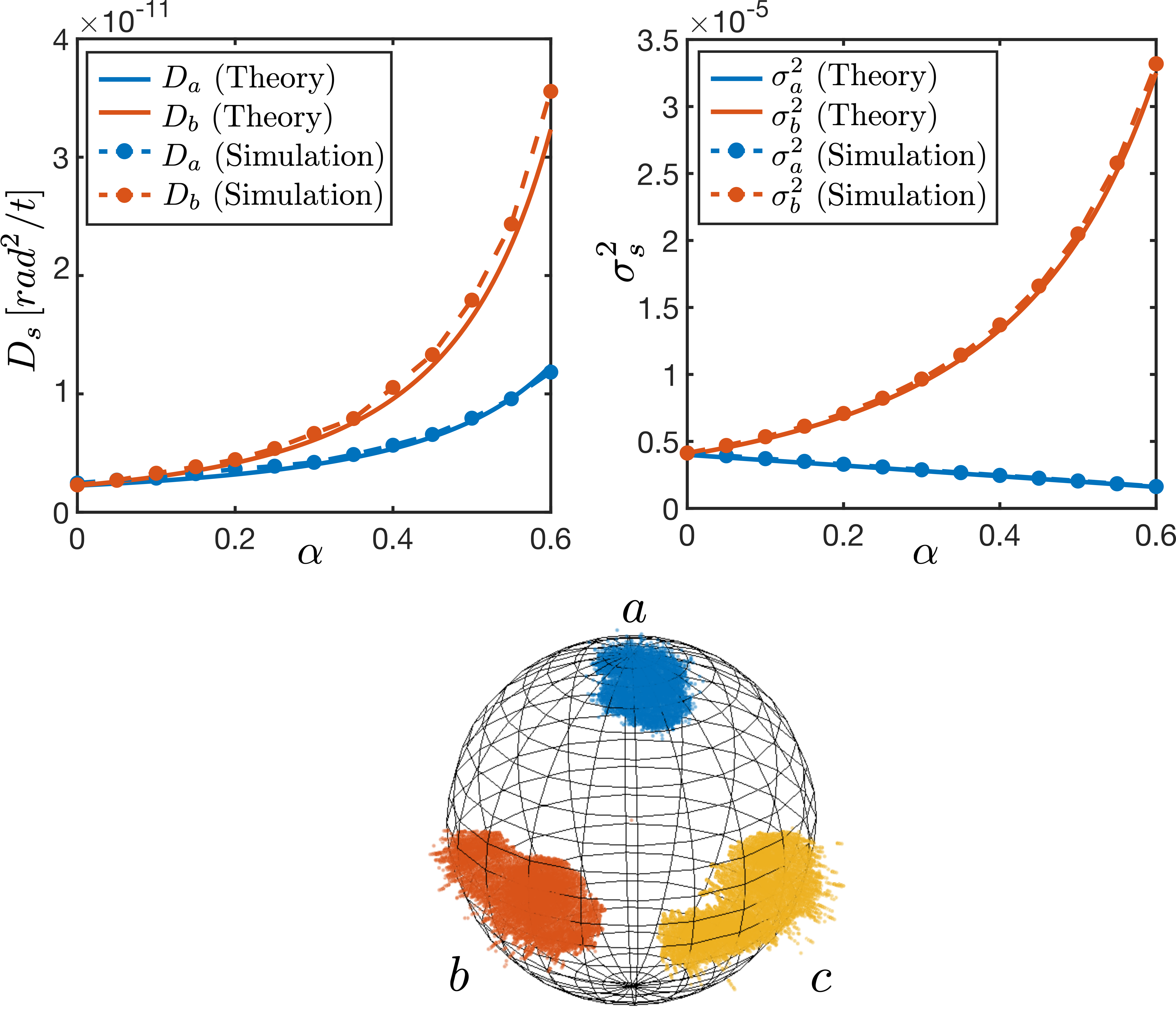}
\caption{Plots of (left) diffusion, and (right) fluctuation for representations of a frequent (subscript $a$, blue) and a background stimulus (subscript $b$, red). Horizontal axes are the probability of the frequent stimulus, $\alpha$. In both plots $m=n=10,p=20$, $\gamma=0.04$, and $\eta=0.005$. (bottom) History of representations for three trial stimuli after $2.2 \times 10^{5}$ training steps. $n=p=3$, $\gamma=0.1$, $\eta=0.1$, and $\alpha=0.5$. $a$ is the frequent and $b$ and $c$ are two orthogonal background stimuli.}
 \label{figfreq}
\end{figure}

\section{Discussion}
In a two-layer neural network model of the olfactory system, we show, using theory and simulations, that the stochasticity in SGD online learning could result in a drift of stimuli representation over time, even after the training is complete and no measurable change in the performance is observed. We analytically demonstrate the dependency of the drift on the input distribution and, in particular, show that a frequently presented stimulus drifts at a relatively slower rate. This finding is consistent with experimental observations in the piriform cortex \citep{nature2021}.

We studied the learning dynamics in the high-dimensional space of network parameters. In this space, drift can be considered as any movement tangential to the manifold of minimum-loss, the effects of which aggregate over time to create an effective diffusion process. Orthogonal to this is the fluctuation outside the manifold that has a finite variance due to the mean-reverting property of the gradient. For the tangential gradient to have a non-zero value, an orthogonal deviation from the manifold was necessary (see Eq.\ref{gT} and Figure \ref{figschem}). In a way, this makes the diffusion on the manifold a second order phenomenon, and that explains why the amount of diffusion depends on the fluctuation covariance (Eq.\ref{Dsr}). In our work, diffusion (random walk) emerged as a mechanism underlying the representational drift. Note that in this sense it is different from the term drift in Physics.

In the representation space, the effects of the fluctuation and tangential movements are deformations and rigid-body rotations of the space, respectively. If we consider the representation of a given stimulus over time, its trajectory consists of two parts: a random walk movement on a high-dimensional sphere, and simultaneously, a mean-reverting fluctuating process that causes deviations on and outside the sphere (i.e. changing norms and pair-wise angles of representations). This can be observed from the visualization in the bottom panel of Figure \ref{figfreq}. Note that the random walk on a sphere dynamics is consistent with the over-time decay of population response self-similarity as observed in the experiments of \citep{nature2021}.
Further, the lower diffusion rate observed for a more frequent stimulus suggests that the rigid-body rotation on average rotates the frequent stimulus to a lesser degree. This could be explained by, for example, the axis of rotation being on average closer to the representation of the frequent stimulus. The lower diffusion and fluctuation for the more frequent stimulus could be traced to the mechanism by which the gradient tends to preserve the representation and the output of more rehearsed stimuli..

Recently, \citet{qin2023coordinated} studied drift in a one-layer neural network model of sensory systems with a similarity matching objective and a Hebbian/anti-Hebbian learning rule. Similarly to our results, they showed that the representations undergo a rotational diffusion process. This is not surprising as both models have objective functions with degenerate solutions and rotational symmetry. In our case, the L2 regularization creates that symmetry. 
\citet{qin2023coordinated} also demonstrated a stimulus dependency of the diffusion coefficient, such that the drift for a given eigenvector direction was inversely related to its eigenvalue.
However, in that study synaptic noise was injected to the network and the stimulus dependency was not studied when this noise was zero. Here, we used a two-layer neural network model with MSE loss to study drift under no external noise. We showed that the sampling noise due to the SGD stochasticity is enough to demonstrate the stimulus dependency of the drift rate. Along that line, we speculate that the anisotropic profile of the SGD noise enhances the stimulus dependency of the drift when compared to an isotropic (synaptic) noise.
Our results further show that in the pure SGD case, the drift direction is limited to the subspace of existing stimuli representations. When an isotropic noise is injected to the weights, this will no longer be the case since the whole tangent space will be explored.

SGD dynamics have been studied extensively in the machine learning literature (see e.g. \citet{mandt2017stochastic, jastrzkebski2017three, zhu2018anisotropic,chaudhari2018stochastic,yaida2018fluctuation,zhangentropy}). In a recent study, \citet{li2022what} provided a mathematical framework for investigating the limiting dynamics of SGD around the local minimizer manifold in overparametrized networks. They used this framework to study the regularization effect of SGD (i.e. implicit bias), and showed that label noise could drive the network to areas with flatter landscape with potential benefits for generalization \cite{wei2019noise, blanc2020implicit,cowsik2022flatter}. Unlike that study, our analysis deals with a regime where the system has reached a final stage of learning with a steady local loss landscape. Hence in our case the effective dynamics on the manifold is a pure random-walk. Additionally, the analytical tractability of our model allowed us to calculate the exact gradient profile around the manifold. We found that to accurately calculate the drift, we had to integrate the effect of the tangential gradient not just on the manifold but away from it (hence the third power of the learning rate in our diffusion rate, one of which stems from the normal fluctuation covariance). Finally, our model contains weight decay, which influences both the learned manifold and the gradients.

Linear multilayer networks, despite their simplicity, demonstrate nonlinear and nontrivial learning behaviors \cite{saxe2013exact, li2021statistical}.
In our case, the two-layer feedforward network was the simplest architecture that provided necessary redundancy in the network parameters through the product of the weight matrices.
We used the identity mapping (autoencoder) setup to simplify the derivations of main formalisms, allowing us to focus on the role of input distribution on the drift. The dependence of the drift on the data is inherent in the diffusion term in Eq.\ref{Dsr}, which ultimately depends on the 2\textsuperscript{nd}/4\textsuperscript{th} moments of the input distribution.
Extensions of our work may include nonlinear and deep networks with recurrent connections in addition to a non-stationary data-distribution. The main concepts on which our framework builds are valid for more general linear or nonlinear networks minimizing an objective function with stationary data. In those cases, the loss could still be approximated as a quadratic function near the manifold of minimum loss, and hence the equations for fluctuations and tangent movements still hold. However, the specific form of the gradient as the driver of the drift and the geometry of the manifold will be problem-dependent.
We expect our main findings (i.e. the presence of drift, and lower drift for a more frequent stimulus) to carry over to nonlinear cases, however, we leave that to future studies.

\section*{Acknowledgements}
We are grateful for fundings from the Swartz Foundation, and the National Institute of Health (U19NS112953-01).

\bibliography{drift}
\bibliographystyle{icml2023}

\newpage
\appendix
\nocitesec{*}
\onecolumn
{\large \textbf{Appendix}}

\section{Derivation of the Manifold and the Hessian (Proof of Theorem \ref{theoremM})} \label{prooftheoremM}
We lay out different parts of the proof over the next few sections. We first calculate all the critical points that satisfy the zero expected gradient condition (Section \ref{criticalpoints}), and from those, we calculate the Hessian (sections \ref{Heigenspace} and \ref{HeigenspaceM}), and show that the manifold $\mathcal{M}$ is the only stable solution with non-negative Hessian spectrum (Section \ref{instability}). The proof is summarized in 
Section \ref{proofMsummary}.

\subsection{Derivation of the critical points} \label{criticalpoints}
The gradient can be found by taking the derivative of the loss function in Eq.\ref{loss}. By assuming $\bm{y}=\bm{x}$, we have:
\begin{align}
\bm{g}(\bm{x};\bm{\theta}) = \nabla_{\bm{\theta}} l(\bm{x};\bm{\theta}) = (\bm{G_U},\bm{G_W}), \,\,
\left\{ \begin{array}{l}
 \bm{G_W}= \nabla_W l = (\bm{W}\bm{U}-\bm{I}_n)\bm{x}\bm{x}^T\bm{U}^T  +\gamma\bm{W} \\
 \bm{G_U}=\nabla_U l = \bm{W}^T(\bm{WU}-\bm{I}_n)\bm{x}\bm{x}^T + \gamma\bm{U}
 \end{array} \right.
 \label{eqnGradient}
\end{align}
At the critical points the expected gradients are zero. This is equivalent to:
\begin{align}
\langle \bm{g}(\bm{x};\bm{\theta})\rangle_x = \nabla_{\bm{\theta}}L  = 0 \,\Longrightarrow \,
\left\{ \begin{array}{c}
 (\bm{I}_n -\bm{WU})\Sx\bm{U}^T = \gamma\bm{W}  \\
 \bm{W}^T(\bm{I}_n-\bm{WU})\Sx = \gamma\bm{U}
  \end{array} \right.
  \label{WU}
\end{align}
which need to be solved for $\bm{W} \in \mathbb{R}^{n \times p}$ and $\bm{U} \in \mathbb{R}^{p \times n}$ under $p \geq n$. A similar set of equations was studied in \citet{pmlr-v97-kunin19a} but with $p \leq n$. We follow part of a proof from that study (Proof 2.1 in that paper) to show two facts about the solutions of the above equations (note that despite different assumptions for the dimension of the hidden layer, the following facts still hold).

\underline{\textit{\text{Claim 1}}}: $\bm{C} := (\bm{I}_n - \bm{W}\bm{U})\Sx \succeq 0$.
\textit{Proof}.
By multiplying the first line of Eq.\ref{WU} by $\bm{W}^T$ from the right we get $(\bm{I}_n - \bm{W}\bm{U})\Sx(\bm{W}\bm{U})^T = \gamma\bm{W}\bm{W}^T \succeq 0$, and therefore $\Sx(\bm{W}\bm{U})^T \succeq (\bm{W}\bm{U})\Sx(\bm{W}\bm{U})^T$. Following a property of positive semi-definite matrices that states if $A \succeq 0$ and $AB^T \succeq BAB^T$, then $A \succeq BA$, we have $\Sx \succeq \bm{W}\bm{U}\Sx$, which proves the claim. $\qedsymbol{}$ 

\underline{\textit{\text{Claim 2}}}: $\bm{U}=\bm{W}^T$. \textit{Proof}.
By subtracting the transpose of the first line of Eq.\ref{WU} from the second, and using the symmetric property of $\bm{C}$, we have: $(\bm{U}-\bm{W}^T)(\bm{C}+\gamma\bm{I}_n)=0$. Now, since $\bm{C}+\gamma\bm{I}_n \succ 0$, we use a property of positive definite matrices that states if $B \succ 0$ and $A^T BA=0$, then $A=0$, to get $\bm{U}-\bm{W}^T = 0$. $\qedsymbol{}$ For more information see \citet{pmlr-v97-kunin19a} and its supplementary material.

Replacing $\bm{W}$ from the first line of Eq.\ref{WU} into its second line, and using $\bm{U}=\bm{W}^T$ from Claim 2, we can form an equation for $\bm{W}\bm{W}^T$:
\begin{align}
(\bm{I}_n -\bm{W}\bm{W}^T)\Sx (\bm{W}\bm{W})^T (\bm{I}_n -\bm{W}\bm{W}^T)\Sx = \gamma^2 \bm{W}\bm{W}^T
\label{WWt}
\end{align}
To solve the above we note that because we showed that $\bm{C} = (\bm{I}_n -\bm{W}\bm{W}^T)\Sx$ is semi-positive definite and hence symmetric, two matrices $\bm{W}\bm{W}^T$ and $\Sx$ commute and hence can be diagonalized simultaneously. Therefore, if $\Sx = \bm{V}\bm{S_V}\bm{V}^T$ is the singular value decomposition (SVD) with $\bm{S_V} = \text{diag}(\{s_i\})$, we can assume $\bm{W}\bm{W}^T = \bm{V}\bm{\Lambda}\bm{V}^T$, where $\bm{\Lambda} = \text{diag}(\{\Lambda_i\})$ for $i \in [1,n]$.
Replacing these in Eq.\ref{WWt}, leads to $n$ separate equations for the diagonal entries $\Lambda_i$:
\begin{align}
\Lambda_i(\Lambda_i^2s_i^2 -2\Lambda_is_i^2 + (s_i^2-\gamma^2)) = 0, \quad i \in [1,n]
\end{align}
A valid solution for each entry is either $\Lambda_i=0$ or $\Lambda_i = 1 - \frac{\gamma}{s_i}$ (note $\Lambda_i = 1 + \frac{\gamma}{s_i}$ also satisfies the above equation but it's not a valid solution overall since any $\bm{\Lambda}$ with such an entry violates the positive semi-definiteness of $\bm{C}$). For a given $\bm{\Lambda}$, we have $\bm{W} = \bm{V}\bm{\Lambda}^{\frac{1}{2}}\bm{Q}^T$ and $\bm{U} = \bm{W}^T = \bm{Q}\bm{\Lambda}^{\frac{1}{2}T}\bm{V}^T$ for any orthonormal matrix $\bm{Q}$.
Based on the values of diagonal entries, we can classify the critical points into three sets:
\begin{itemize}
\item \underline{\textit{The manifold ($\mathcal{M}$):}} $\Lambda_i = 1 - \frac{\gamma}{s_i}$ for all $i \in [1,n]$. In this case $\bm{\Lambda} = \bm{I}_n - \gamma\bm{S_V}^{-1}$ and hence $\bm{W}\bm{W}^T = \bm{I}_n - \gamma\Sxi$.
\item \underline{\textit{Zero solution:}} $\Lambda_i = 0$ for all $i \in [1,n]$. In this case $\bm{W}$ and $\bm{U}$ are both zero.
\item \underline{\textit{Mixed solutions:}} $\Lambda_i = 0$ for some $i \in [1,n]$, and $\Lambda_i = 1 - \frac{\gamma}{s_i}$ for the other entries.
\end{itemize}

In the next section we show, by calculating the Hessian spectrum, that the manifold ($\mathcal{M}$) is a stable solution, and the zero and the mixed solutions are unstable critical points. We will also use the results of the Hessian in other parts of the paper.

\subsection{Hessian eigenspace} \label{Heigenspace}
To find the eigenvectors of the Hessian, we assume we are a small deviation away from a critical point $\tilde{\bm{\theta}} = (\Ut,\Wt)$ along the vector $\bm{n}$.  The eigenvalue equation can be written as:
\begin{align}
\bm{n} =  (\bm{N_U},\bm{N_W}): \qquad \bm{H}\bm{n} = \lambda \bm{n} \,\equiv \, \left\{ \begin{array}{c}
\nabla_{W}L\,|_{{\tilde{\bm{\theta}} + \bm{n}}} = \lambda\bm{N_W}
 \\
 \nabla_{U}L \,|_{{\tilde{\bm{\theta}} + \bm{n}}} = \lambda\bm{N_U}
\end{array} \right.
 \label{hesseqns}
\end{align}
After replacing $\bm{n}$ in the gradient equations (Eq.\ref{eqnGradient}), and ignoring second and higher order terms, we have the following matrix equations for $\bm{N_W}$, $\bm{N_U}$ and $\lambda$:
\begin{align}
  \left\{ \begin{array}{c}
 (\Wt\Ut-\bm{I}_n)\Sx \bm{N_U}^T + (\bm{N_W}\Ut + \Wt\bm{N_U})\Sx\Ut^T + \gamma\bm{N_W} = \lambda\bm{N_W} \\
 \bm{N_W}^T(\Wt\Ut-\bm{I}_n)\Sx  + \Wt^T(\bm{N_W}\Ut + \Wt\bm{N_U})\Sx + \gamma\bm{N_U}  = \lambda\bm{N_U}  \end{array} \right.
  \label{hessmateqns}
\end{align}
To find the eigenvectors and the eigenvalues, we won't need to calculate the Hessian matrix explicitly. However, for completeness we mention here. Vectorizing the above equations lead to:
\begin{gather}
\resizebox{0.92\textwidth}{!}
{
$
[(\bm{\tilde{U}}\Sx \Ut^T + \gamma\bm{I}_p) \otimes \bm{I}_n]\text{vec}(\bm{N_W}) + [(\bm{I}_p \otimes (\Wt\Ut-\bm{I}_n)\Sx)\bm{K}^{(p,n)} + (\Ut\Sx \otimes \Wt)]\text{vec}(\bm{N_U}) = \lambda \text{vec}(\bm{N_W})
$
}\\
\resizebox{\textwidth}{!}
{$
[(\Sx(\Wt\Ut-\bm{I}_n)^T \otimes \bm{I}_p)\bm{K}^{(n,p)} + (\Sx\Ut^T \otimes \Wt^T)]\text{vec}(\bm{N_W}) + [(\Sx \otimes \Wt^T\Wt) + \gamma(\bm{I}_n \otimes \bm{I}_p)]\text{vec}(\bm{N_U}) = \lambda \text{vec}(\bm{N_U})$
} \nonumber
\end{gather}
where the $"\text{vec}"$ operator reshapes a matrix to a vector, and $\otimes$ denotes the Kronecker product between two matrices. If we consider the parameter space as $\bm{\theta} \equiv \text{vec}(\bm{U})\oplus\text{vec}(\bm{W})$, the $\scriptstyle 2np \times 2np$ Hessian matrix can be formed as below:
\begin{equation}
\bm{H} = 
\left( \begin{array}{c|c}
\bm{H_{UU}} & \bm{H_{UW}} \\
\hline
\bm{H_{WU}} & \bm{H_{WW}} \end{array} \right)
\end{equation}
\begin{align*}
&\bm{H_{WW}} = (\bm{\tilde{U}}\Sx \Ut^T + \gamma\bm{I}_p) \otimes \bm{I}_n\\
&\bm{H_{UU}} = (\Sx \otimes \Wt^T\Wt) + \gamma(\bm{I}_n \otimes \bm{I}_p) \\
&\bm{H_{WU}} = \bm{H_{UW}^T} = (\bm{I}_p \otimes (\Wt\Ut-\bm{I}_n)\Sx)\bm{K}^{(p,n)} + (\Ut\Sx \otimes \Wt)
\end{align*} 
where $\bm{K}$ is the Commutation matrix.

\subsubsection{Derivation of the Hessian eigenspace for the manifold} \label{HeigenspaceM}
Here we find the Hessian eigenspace for the manifold of Theorem \ref{theoremM}. Replacing the equations of the manifold ($\Wt\Ut = \bm{I} - \gamma\Sxi$ and $\Ut = \Wt^T$) in Eq.\ref{hessmateqns} results in the following equations for $\bm{N_W}$, $\bm{N_U}$ and $\lambda$:
\begin{align}
  \left\{ \begin{array}{c}
-\gamma \bm{N_U}^T + (\bm{N_W}\Wt^T + \Wt\bm{N_U})\Sx\Wt + (\gamma-\lambda)\bm{N_W} = 0 \\
 -\gamma \bm{N_W}^T + \Wt^T(\bm{N_W}\Wt^T + \Wt\bm{N_U})\Sx + (\gamma-\lambda) \bm{N_U}  = 0\end{array} \right.
 \label{hesseqnsM}
\end{align}
Since the Hessian is a real and symmetric matrix, the set of its eigenvectors are orthogonal and span $\mathbb{R}^{2np}$. A subspace of this eigenspace corresponds to $\lambda=0$, which is the tangent space of the manifold. We will study the tangent space in more detail in the next section, but here we note that its dimension is $np - \frac{n(n+1)}{2}$, which results from the degrees of freedom (DOF) in $\Wt$ in the equation of the manifold ($\Wt\Wt^T = \text{fixed}$). In this section, we will find the eigenspace corresponding to $\lambda \neq 0$ (i.e. the normal space), which, as a result, has the dimension $np + \frac{n(n+1)}{2}$.

We propose two sets of solutions for the equations above for $\lambda \neq 0$. The first one satisfies:
\begin{align} \label{Hsol1}
\text{Set 1:}\quad \bm{N_U} = -\bm{N_W}^T, \quad \bm{N_W} = \Wt\Wt^T\bm{S}\Wt + \bm{K}\Wt_{\perp}, \quad \text{with}\,\,\, \lambda = 2\gamma.
\end{align}
where $\bm{S}$ is an arbitrary $n \times n$ symmetric matrix, $\bm{K}$ is an arbitrary $n \times (p-n)$ matrix, and $\Wt_{\perp}$ is a full-rank $(p-n) \times p$ matrix whose rows are orthogonal to rows of $\Wt$.
(it is straightforward to check that the above satisfies Eq.\ref{hesseqnsM} as the term in the parentheses vanish i.e. $\bm{N_W}\Wt^T - \Wt\bm{N_W}^T=0$). The total DOF in this solution is $np - \frac{n(n-1)}{2}$ resulting from DOF in $\bm{S}$ and $\bm{K}$. The second solution is:
\begin{align} \label{Hsol2}
\text{Set 2:}\quad \bm{N_W} = \bm{Z}\Wt, \quad \bm{N_U} = \Wt^T\bm{Z},\quad \text{where}\,\bm{Z}\, \text{satisfies} \,\,
 -\gamma\bm{Z}^T + 2\bm{Z}\Sx - \gamma\Sxi\bm{Z}\Sx -\lambda\bm{Z} = 0.
\end{align}

To solve the equation for $\bm{Z} \in \mathbb{R}^{n \times n}$ and $\lambda$, let's assume $(\bm{v}_i,s_i)$ are the eigenvectors/eigenvalues of $\Sx$ arranged in descending order, i.e. $s_1\geq..\geq s_n$. By replacement, we can show that $n^2$ solutions exists for $(\bm{Z},\lambda)$ such that each solution corresponds to a pair of input eigenvectors $(\bm{v}_i,\bm{v}_j)$. Specifically, the solutions consists of:
\begin{align} \label{Hsol2i}
& \left \{ \begin{array}{lll}
\bm{Z}_{ii} = C_{ii}\bm{v}_i\bm{v}_i^T, & \lambda_{ii} =2(s_i-\gamma) & i \in [1,n]\\ [8pt] 
\bm{Z}_{ij} = C_{ij}(\kappa_{ij}\bm{v}_i\bm{v}_j^T + \bm{v}_j\bm{v}_i^T), & \lambda_{ij} = 2s_i - \gamma(\frac{s_i}{s_j}+ \kappa_{ij}) & i,j \in [1,n], i\neq j
\end{array} \right. \\
 &\quad \,\,\,\, \text{for}\,\, \kappa_{ij} = \text{sgn}(i-j)(\sqrt{1+b^2}-b),\,\, \text{where}\,\, b = (\frac{1}{\gamma} - \frac{s_i+s_j}{2s_is_j})|s_i-s_j|. \nonumber
\end{align}
The normalization constants $C_{ij}$ can be found by imposing $\bm{n}_{k}^T\bm{n}_k = 1$, which leads to $C_{ij} = [(1+\kappa_{ij}^2)(\omega_i + \omega_j)]^{-\frac{1}{2}}$, where $\omega_i := 1- \frac{\gamma}{s_i}$. It is also easy to check that $\bm{n}_k^T\bm{n}_l = 0$ for $k \neq l$.
Since the number of solutions in this set is $n^2$, we see that the total number of solutions in the two sets for the Hessian eigenspace ($1$ and $2$) add up to the dimension of the normal space mentioned above. This shows that these two sets are the complete solutions.

\underline{\textit{\text{Claim}}}: For $i < j$, $\lambda_{ij}>\lambda_{ji}>0$. (\textit{Proof}. The left inequality results directly from writing the eigenvalues as a function of $s_i$ and $s_j$ noting that $s_i\geq s_j > \gamma$. For the second inequality, first note that $0<\kappa_{ji} \leq 1$. Additionally, we have $s_j(2 - \frac{\gamma}{s_i}) > \gamma$, which proves $\lambda_{ji} = 2s_j - \gamma(\frac{s_j}{s_i} + \kappa_{ji}) > 0$ \qedsymbol{}).
\textit{As a result, all the Hessian eigenvalues for the manifold of solution are non-negative, which makes the manifold $\mathcal{M}$ a stable solution.}

\subsubsection{Instability of the other critical points} \label{instability}
Here we show that the alternate critical points found in Section \ref{criticalpoints} are unstable by showing that their Hessian spectrum contains negative eigenvalues.

\underline{Zero solution}: In this case all $\Lambda_i$ are zero and we have $\Wt=0$ and $\Ut=0$. By replacement in Eq.\ref{hessmateqns} we get:
\begin{align}
  \left\{ \begin{array}{c}
 \Sx \bm{N_U}^T  = (\gamma-\lambda)\bm{N_W} \\
 \bm{N_W}^T\Sx  = (\gamma-\lambda)\bm{N_U}  \end{array} \right.
\end{align}
Since $\Sx$ is full-rank, to have non-zero solutions to the above we should have $\lambda \neq \gamma$. Replacing $\bm{N_U}$ from the second line in the first one leads to the equation $(\Sx^2 -  (\gamma-\lambda)^2\bm{I}_n)\bm{N_W}=0$. For a non-zero solution to exist, the determinant of the term in the parentheses should be zero. This is equivalent to the following characteristic equations: $s_i^2 - (\gamma-\lambda)^2 = 0$, for $i \in [1,n]$. Recalling from the assumptions that $s_i>\gamma$, there is a positive and a negative $\lambda$ associated to each $s_i$:
 $\lambda_{+} = \gamma + s_i$, and $\lambda_{-} = \gamma - s_i$. \textit{Therefore, the zero solution is an unstable saddle point}.

\underline{Mixed solutions}:
Recall from Section \ref{criticalpoints} that for the mixed solution $\bm{W} = \bm{V}\bm{\Lambda}^{\frac{1}{2}}\bm{Q}^T$, where $\bm{\Lambda}$ is a diagonal matrix with some entries equal to zero. For a given mixed solution, let $\mathcal{I} \subset [1,n]$ denote the indices of the zero entries of $\bm{\Lambda}$, and $\mathcal{J} = \mathcal{I}^{\perp}$ the indices of the non-zero diagonal entries. We can show that for the mixed solution, a part of the Hessian eigenspace (corresponding to indices $\mathcal{J}$) are similar to that of the manifold described in Section \ref{HeigenspaceM}. For example, it is easy to check that for any $j \in \mathcal{J}$ and $\bm{Z}_{jj} = C_{jj}\bm{v}_j\bm{v}_j^T$,  the solution $\bm{N_W} = \bm{Z}_{jj}\Wt$, $\bm{N_U} = \Wt^T\bm{Z}_{jj}$ satisfy the Hessian equations (Eq.\ref{hessmateqns}) with positive eigenvalues $\lambda_{jj} = 2(s_j-\gamma)$. Similar to the case of the manifold, this part of the eigenspace corresponds to positive eigenvalues. However, in the mixed case, other solutions to the Hessian equation exist that have negative eigenvalues. Specifically, if $\bm{q}_i$ are columns of $\bm{Q}$, for any $i \in \mathcal{I}$ and  $\bm{\alpha} \in \text{span}(\{\bm{q}_r\})$ for $r \in [1,p]-\mathcal{J}$, the solution $\bm{N_W} = \bm{N_U}^T = \bm{v}_i\bm{\alpha}^T$ satisfy Eq.\ref{hessmateqns} with the eigenvalue $\lambda = \gamma - s_i$, which is negative. Hence, in this case we have both negative and positive eigenvalues. \textit{This proves that the mixed solutions are unstable saddle points.}

\subsection{Summary of the proof of Theorem \ref{theoremM}} \label{proofMsummary}
In Section \ref{criticalpoints} we found all the critical points and classified them into three sets consisting of the manifold ($\mathcal{M}$), the zero, and the mixed solutions. In Section \ref{HeigenspaceM}, we showed that the Hessian eigenvalues are non-negative for the manifold (see the Claim at the end of the section). Finally, in Section \ref{instability}, we showed that the other critical points are unstable saddle points. This proves that $\mathcal{M}$ consists of all the stable critical points in the parameter space. \qedsymbol{}

\newpage
\section{Tangent Space to the Manifold} \label{secTangent}
In this section, we describe the first order differential geometry of the manifold ($\mathcal{M}$) which will  allow us to study the dynamics near the manifold. In the two equations describing the manifold in Theorem \ref{theoremM}, $\Ut$ is simply the transpose of $\Wt$, and hence the manifold can be studied by considering matrix $\Wt \in \mathbb{R}^{n \times p}$ in the equation $\Wt\Wt^T = \bm{I}_n - \gamma\Sxi$. This defines a matrix manifold in $\mathbb{R}^{n \times p}$. In the case of $\gamma=0$, this simplifies to the Steifel manifold for which there are known results (see \citet{edelman1998geometry} and \citet{absil2009optimization}). Inspired by those results, here we consider a more general but similar case of $\gamma \neq 0$. The next lemma describes the tangent space to the manifold.
\tanlemma*
\begin{proof}
First, we show that any vector $\bm{t}$ of the form above belongs to the tangent space of the manifold. This is equivalent to showing that if we are on the manifold and get displaced by vector $\varepsilon\, \bm{t} = (\varepsilon \bm{T_W}^T, \varepsilon \bm{T_W})$ for a small $\varepsilon$, we stay on the manifold to the first order of $\varepsilon$ (i.e. the equations that define the manifold still hold to that order). From the two equations, the transpose equation ($\Ut = \Wt^T$) is automatically satisfied as we have $\bm{T_U}=\bm{T_W}^T$. For the other equation ($\Wt\Wt^T=\bm{I}-\gamma\Sxi$), we show that the change in $\bm{W}\bm{W}^T$ is of the second order:
\begin{align}
    \Delta(\bm{W}\bm{W}^T) &=
    \varepsilon \Wt \bm{T_W}^T + \varepsilon\bm{T_W} \Wt^T + \mathcal{O}(\varepsilon^2)\\
    &= \varepsilon\Wt\Wt^T\bm{\Omega}^T\Wt\Wt^T + \varepsilon\Wt\Wt_{\perp}^T\bm{K}^T + \varepsilon\Wt\Wt^T\bm{\Omega}\Wt\Wt^T +  \varepsilon\bm{K}\Wt_{\perp}\Wt^T + \mathcal{O}(\varepsilon^2)  \nonumber
    \\ &= \mathcal{O}(\varepsilon^2) \nonumber
\end{align}
where in going from the second to the third line we used $\bm{\Omega}=-\bm{\Omega}^T$ and $\Wt_{\perp}\Wt^T=0$.

Next, we demonstrate that the dimension of the vector space spanned by $\bm{t}$ is the same as the dimension of the tangent space. The dimension of the tangent space can be inferred from the difference between the parameters and the constraints in $\Wt\Wt^T = \bm{I}-\gamma\Sxi$, which are $np$ and $n(n+1)/2$ respectively. Hence, the dimension of the tangent space is $np - n(n+1)/2$.
To find the dimension of the space spanned by $\bm{t}$, note that the transformation from $(\bm{\Omega},\bm{K})$ to  $\bm{T_W}$ is one-to-one. This is simply because each of the mappings $\bm{\Omega} \to \Wt\Wt^T\bm{\Omega}\Wt$ and $\bm{K} \to \bm{K}\Wt_{\perp}$ are one-to-one (as $\Wt$ and $\Wt_{\perp}$ are full rank), and their ranges do not overlap (as the rows of $\Wt$ and $\Wt_{\perp}$ are orthogonal).
As a result, the dimension of the space spanned by $\bm{T_W}$ is equal to the dimension of the space spanned by the skew-symmetric $\bm{\Omega}$, which is $n(n-1)/2$, added to that of $\bm{K}$, which is $n(p-n)$. This becomes the same as the dimension of tangent space mentioned above. 
Since we already showed that any vector $\bm{t}$ lies in the tangent space, the set of $\bm{t}$'s span the tangent space. This completes the proof.
\end{proof}

With the definition of the Euclidean inner product in Eq.\ref{innerprod}, we can define notions of orthogonality and projection. 
In the case of the two layer network ($L=2$) with weights $\bm{U}$ and $\bm{W}$, the inner product for two vectors $\bm{e} = (\bm{E_U},\bm{E_W})$ and $\bm{f} = (\bm{F_U},\bm{F_W})$ becomes:
 \begin{align} \label{innerprod2}
    \bm{e}^T \bm{f} = \tr(\bm{E_W}^T \bm{F_W}) + \tr(\bm{E_U}^T \bm{F_U})
 \end{align}
The normal space can be defined as the space orthogonal to the tangent space. This is the same as the eigenspace of the Hessian corresponding to positive eigenvalues and was described in Section \ref{HeigenspaceM} (similarly, vectors in the tangent space corresponds to Hessian eigenvectors with zero eigenvalue, i.e. $\bm{H}\bm{t} = 0$). Finally, the inner product of an arbitrary vector $\bm{g}$ with vector $\bm{t}$ in the tangent space can be found from:
\begin{align} \label{tangpr1}
    \bm{t}^T\bm{g} = \trace (\bm{T_W}^T\bm{G_W}) + \trace (\bm{T_U}^T\bm{G_U}) = \trace (\bm{T_W}(\bm{G_W}^T+\bm{G_U}))
\end{align}
where $\bm{T_W}$ takes an appropriate form from Lemma $\ref{lemmatangent}$. A natural basis for the tangent space can be constructed from matrices $\bm{\Omega}_{rs}:=(\bm{v}_r\bm{v}_s^T - \bm{v}_s\bm{v}_r^T)/\sqrt{2\omega_r\omega_s(\omega_r+\omega_s)}$, for $r,s \in [1,n]$ and $r>s$, as well as matrices $\bm{K}_{uv}$ with components $[\bm{K}_{uv}]_{ij} = \delta_{ij}^{uv}$ ($u,i \in [1,n]$, $v,j \in [1,p-n]$). Recall that $\omega_i = 1-\gamma/s_i$ and $(\bm{v}_i,s_i)$ are eigenvector/eigenvalue pairs of the input covariance, and the normalization constants ensure $\bm{t}^T\bm{t}=1$. In Section \ref{prooftheoremphi}, we use these results to project the gradient onto the tangent space.

\section{Calculation of the Fluctuations}
In this section, we present additional results and derivations for the fluctuations in the normal space. 
\subsection{Derivation of the SDE for normal deviations}
\label{derivationSDE}
Here, we will derive the stochastic differential equation (SDE) for $\bm{\theta}_N$ in Eq.\ref{SDEthetaN}.
Without loss of generality, at a given time, we take $\tilde{\bm{\theta}}$ to be the closest point to $\bm{\theta}$, and also put the origin on it.
Projecting both sides of the SGD equation (Eq.\ref{sgd}) to the normal space of the manifold, and by defining $\bm{g}_N(\bm{x};\bm{\theta}_N) = \Pi_N(\bm{g}(\bm{x};\bm{\theta}_N))$ to be the normal projection of the gradient, we have: 
\begin{align}
\Delta \bm{\theta}_N &= -\eta \bm{g}_N
=  -\eta \langle \bm{g}\rangle_x  - \eta\bm{e}  \nonumber
\end{align}
where $ \bm{e} = \bm{g}_N - \langle \bm{g}\rangle_x$ is a random variable with zero mean, and $\langle \bm{g}_N\rangle_x = \langle \bm{g}\rangle_x$ is the average gradient is in the normal direction. The above is the update in one training step which we take to be equivalent to the continuous time-difference $\Delta t_{step} \equiv \eta$. Over large timescales ($\Delta t = n\eta$, with $n \gg 1$), the update equation becomes:
\begin{align}
\Delta\bm{\theta}_N &= -n\eta \langle \bm{g} \rangle_x - \bm{e'} \quad \text{(n steps)}
\end{align}
Here $\bm{e'} = \eta\sum_{i=1}^{n}\bm{e}_i$ is a random variable with $\langle \bm{e'} \rangle = 0$ and $\text{var}(\bm{e'}) = n\eta^2\text{cov}(\bm{e}) = \Delta t \,\eta \,\text{cov}(\bm{e})$. As a result, we can approximate it as $\bm{e'} \sim \sqrt{\eta\,\text{cov}(\bm{e})} \bm{B}_{\Delta t}$ where $\bm{B}_t$ is a multi-dimensional Brownian motion (Wiener process) with properties such as independent Gaussian increments and a variance proportional to the time difference, i.e. $\bm{B}_{t+u}-\bm{B}_{t} \sim \mathcal{N}(\bm{0},u\bm{I}_{2np})$. In the limit of $\Delta t \rightarrow 0$, we have the following continuous stochastic differential equation (SDE):
\begin{align}
d\bm{\theta}_N &= -\langle \bm{g} \rangle_x dt + \sqrt{\eta}\bm{C} d\bm{B}_t \quad \text{(continuous)}
\end{align}
where $\bm{C} \equiv \sqrt{\text{cov}(\bm{e})}$.
Since $\bm{e}$ in general is a function of $\bm{\theta}_N$, we can write $\bm{C}(\bm{\theta}_N) = \bm{C}(0) + \mathcal{O}(\theta_N)$. 
Near the manifold we also have $\langle \bm{g} \rangle_x = \bm{H}\bm{\theta}_N + \mathcal{O}(\theta_N^2)$. As a first approximation, we replace $\bm{C}(\bm{\theta}_N)$ by $\bm{C}(0)$, which means the driver of the stochasticity in the equation becomes the gradient on the manifold, i.e. $\tilde{\bm{g}}(\bm{x}): = \bm{g}_N(\bm{x};\bm{0})$.
Note that as the next section shows, under the stationary solution $\theta_N \sim \sqrt{\eta}$, which justifies the approximation in the second term of the equation for small $\eta$ (see also \citet{mandt2017stochastic}).

\subsection{Solution of the SDE for normal deviations}
\label{SDEsolution}
The solution to the SDE in Eq.\ref{SDEthetaN} (aka Ornstein Uhlenbeck process) has been discussed previously \cite{gardiner1985handbook}. For completeness, we provide it here. 
The process was defined by:
\begin{align}
d\bm{\theta}_N(t) = -\bm{H}\bm{\theta}_N(t) dt + \sqrt{\eta}\bm{C} d\bm{B}_t
\label{OU2}
\end{align}
We first perform the following change of variable:
\begin{align} \label{eqchvar}
\bm{\theta}'_N(t) := e^{\bm{H}t}\bm{\theta}_N(t) \, \rightarrow \, \bm{\theta}_N(t) = e^{-\bm{H}t}\bm{\theta}'_N(t).
\end{align}
By taking the differential of the above and replacement from Eq.\ref{OU2} we get:
\begin{align}
d\bm{\theta}'_N(t) &= \bm{H}e^{\bm{H}t}\bm{\theta}_N(t) dt + e^{\bm{H}t}(-\bm{H}\bm{\theta}_N(t) dt + \sqrt{\eta}\bm{C} d\bm{B}_t) = \sqrt{\eta} e^{\bm{H}t}\bm{C} d\bm{B}_t
\end{align}
After integration (Ito integral) and substitution from \ref{eqchvar}, we have:
\begin{align}
\bm{\theta}'_N(t) = \bm{\theta}'_N(0) + \sqrt{\eta}\int_0^t e^{\bm{H}u}\bm{C} d\bm{B}_u \rightarrow \bm{\theta}_N(t) = e^{-\bm{H}t}\bm{\theta}_N(0) +\sqrt{\eta} \int_0^t e^{-\bm{H}(t-u)}\bm{C} d\bm{B}_u 
\end{align}
For a given initial condition, the mean is:
\begin{align}
\langle \bm{\theta}_N(t) \rangle = e^{-\bm{H}t}\bm{\theta}_N(0)
\end{align}
which shows the stationary mean ($t \rightarrow \infty$) is zero. Correspondingly, the correlation function is:
\begin{align}
\text{corr}(\bm{\theta}_N(t),\bm{\theta}_N(s)) &= \langle[\bm{\theta}_N(t)-\langle\bm{\theta}_N(t)\rangle][\bm{\theta}_N(s)-\langle\bm{\theta}_N(s)\rangle]^T\rangle \ \\ &= \eta\langle  \int_0^t e^{-\bm{H}(t-u)}\bm{C} d\bm{B}_u  [\int_0^s e^{-\bm{H}(s-v)}\bm{C} d\bm{B}_v]^T \rangle \nonumber \\
&= \eta \int_0^{\min(t,s)} e^{-\bm{H}(t-u)} \bm{C}\bm{C}^T e^{-\bm{H}^T(s-u)} du \nonumber
\end{align}
where in the last line we used a property of Ito integral. Now recall the change of variable from the main text $\bm{\rho} = \bm{N}^T\bm{\theta}_N$ (Eq.\ref{rhodef}), which means $\text{corr}(\bm{\rho}(t),\bm{\rho}(s)) = \bm{N}^T\text{corr}(\bm{\theta}_N(t),\bm{\theta}_N(s))\bm{N}$. Additionally, we have the decomposition $\bm{H} = \bm{N}\bm{\lambda}\bm{N}^T$, where $\bm{\lambda} = \text{diag}(\{\lambda_k\})$. Replacing these and $\bm{C}\bm{C}^T = \langle \tilde{\bm{g}}\tilde{\bm{g}}^T \rangle_x$ in the above, the integral can be calculated per component. Further, taking $t,s \to \infty$ yields the correlation function for the stationary solution:
\begin{align}
\langle \rho_k(t)\rho_l(s) \rangle = \frac{\eta\langle {\bm{n}}_k^T\tilde{\bm{g}}\,  {\bm{n}}_l^T\tilde{\bm{g}} \rangle_x}{\lambda_k+\lambda_l}e^{-\lambda_k(t-s)}, \, (t\geqslant s).
\end{align}
The covariance equation in the main text (Eq.\ref{rhocovariance}) results from the above under $t=s$.  

\subsection{Projection of the gradient onto the normal space (Proof of Proposition \ref{propH})} \label{proofpropH} 
In Section \ref{HeigenspaceM} we found the normal space which consisted of two sets of solutions (Eqns. \ref{Hsol1} and \ref{Hsol2}). Here we show that $\tilde{\bm{g}}(\bm{x})$ has no projection the first set. Recall that for the first solution we had: $\bm{n} = (-\bm{N_W}^T,\bm{N_W})$ where $\bm{N_W}\Wt^T = \Wt\bm{N_W}^T$. Using $\tilde{\bm{g}}(\bm{x}) = (\Wt^T\bm{Z_x},\bm{Z_x}\Wt)$ 
from Eq.\ref{gM}, and the definition of the inner product (Eq.\ref{innerprod2}), we have:
\begin{align}
\text{Set 1:}\quad \bm{n}^T \tilde{\bm{g}}(\bm{x}) &= \tr(\bm{N_W}^T\bm{Z_x}\Wt) - \tr(\bm{N_W}\Wt^T\bm{Z_x}) = \tr(\bm{N_W}^T\bm{Z_x}\Wt) - \tr(\Wt\bm{N_W}^T\bm{Z_x}) = 0
\nonumber
\end{align}
where in the last equality we used the permutation property of the trace. This leaves us with the second set of solutions which consisted of $\bm{n}_{ij} = (\Wt^T\bm{Z}_{ij},\bm{Z}_{ij}\Wt)$ for $i,j \in [1,n]$ (Eq.\ref{Hsol2i}). The projection on these eigenvectors can be found similarly using the definition of inner product. By defining the coefficients
\begin{align}
S^{ij}_{rs} := \frac{\kappa_{ij}}{s_r} + \frac{1}{s_s}\,(\text{noting} \, \kappa_{ii} \equiv 0), \quad \text{and} \quad \omega_i := 1 - \frac{\gamma}{s_i},
\label{Sij}
\end{align}
(see Section \ref{HeigenspaceM} for definitions of $\kappa_{ij}$ and $C_{ij}$), we will have:
\begin{align} \label{gMproj}
  \text{Set 2:}\quad  \bm{n}_{ii}^T \tilde{\bm{g}} = \gamma\sqrt{2\omega_i}(1 - \frac{x_i^2}{s_i}), \quad
    \bm{n}_{ij}^T \tilde{\bm{g}}_{(i\neq j)} = -\gamma (\omega_{i}+\omega_{j})C_{ij}S^{ij}_{ij}x_ix_j,
\end{align}
where $x_i := \bm{v}_i^T\bm{x}$.
Hence the projections on set 2 are in general non-zero. This completes the proof.

\subsection{Components of the fluctuation matrix} 
\label{calcCompFluc} 
By replacing the projections from Eq.\ref{gMproj} in Eq.\ref{rhocovariance}, the components of the covariance matrix can be calculated as:
\begin{align} \label{rhoijpq}
    &\langle \rho_{ii}\rho_{pp}\rangle
    = \frac{\eta\gamma^2\sqrt{\omega_i\omega_p}}{s_i+s_p-2\gamma}(\frac{\langle x_i^2x_p^2\rangle_x}{s_is_p} - 1), \qquad \langle \rho_{ii}\rho_{pq}\rangle_{(p\neq q)}
    =\langle \rho_{ij}\rho_{pq}\rangle_{(i\neq j)} =0\\
    &\langle \rho_{ij}\rho_{pq}\rangle_{(i\neq j,p\neq q)}
     = \frac{\eta\gamma^2 (\omega_{i}+\omega_{j})(\omega_{p}+\omega_{q})}{\lambda_{ij} + \lambda_{pq}}C_{ij}C_{pq}S^{ij}_{ij}S^{pq}_{pq}\langle x_ix_jx_px_q \rangle_x \nonumber
\end{align}
where $i,j,p,q \in [1,n]$.
\subsection{Fluctuation of the representation norm} \label{flucrepnorm}
Consider $\bm{h}_s = \Ut\bm{v}_s$ to be the representation of stimulus $\bm{v}_s$. According to Proposition \ref{propH}, displacement from the manifold along vector $\bm{\rho} = \sum \rho_k \bm{n}_k$ in the normal space changes the first layer weight by $\Delta\bm{U} = \Wt^T\sum_{i,j} \rho_{ij}C_{ij}(\kappa_{ij}\bm{v}_i\bm{v}_j^T + \bm{v}_j\bm{v}_i^T)$. This correspondingly changes the representation by $\Delta \bm{h}_s = \Delta\bm{U}\bm{v}_s$.
The change in the representation norm can therefore be calculated as:
\begin{align}
\hat{\bm{h}}_s^T\Delta \bm{h}_s 
= \sqrt{\omega_s}\bm{v}_s^T \sum_{i,j=1}^{n} \rho_{ij}C_{ij}(\kappa_{ij}\bm{v}_i\bm{v}_j^T + \bm{v}_j\bm{v}_i^T)\bm{v}_s = \frac{\rho_{ss}}{\sqrt{2}}
\end{align}
where we used $\Wt\Wt^T\bm{v}_s = \omega_s\bm{v}_s$ and $C_{ss} = 1/\sqrt{2\omega_s}$. The variance in the norm becomes:
\begin{align}
\text{var}(|\bm{h}_s|) = \frac{\langle \rho_{ss}^2 \rangle}{2}
\label{varh} 
\end{align}
which is the first part of Eq.\ref{hvar} in the main paper. The right hand side of Eq.\ref{hvar} results simply by taking $i=p=s$ in the first equation of Eq.\ref{rhoijpq} to calculate $\langle \rho_{ss}^2 \rangle$.

\newpage
\section{Tangential Projection of the Gradient (Proof of Theorem \ref{theoremphi})} \label{prooftheoremphi}
We first approximate the gradient near the manifold to the first order of $\bm{\rho}$. We assume we are at point $\tilde{\bm{\theta}} + \rho_k\bm{n}_k$, where $\tilde{\bm{\theta}}$ lies on the manifold, and $\bm{n}_k = (\Wt^T\bm{Z}_k,\bm{Z}_k\Wt)$ is a Hessian eigenvector (Eq.\ref{nk}). The gradient at this point can be calculated by replacing $\bm{W} = (\bm{I} + \rho_k \bm{Z}_k)\Wt$ and $\bm{U} = \Wt^T(\bm{I} + \rho_k\bm{Z}_k)$ in Eq.\ref{eqnGradient}. For $\rho_k \ll 1$, we will ignore second and higher order terms, to have:
\begin{align} 
&\hspace{-2em}\bm{g}(\bm{x};\tilde{\bm{\theta}} + \rho_k\bm{n}_k) = (\bm{G_U},\bm{G_W}); \label{gradapprox}\\
    \bm{G_W} &= (\bm{W}\bm{U}-\bm{I}) \bm{x}\bm{x}^T\bm{U}^T +\gamma\bm{W} \nonumber\\ &= [(\bm{I}+\rho_k\bm{Z}_k)\Wt\Wt^T(\bm{I}+\rho_k\bm{Z}_k)-\bm{I}] \bm{x}\bm{x}^T(\bm{I}+\rho_k\bm{Z}_k^T)\Wt +\gamma(\bm{I}+\rho_k\bm{Z}_k)\Wt + \mathcal{O}(\rho_k^2)  \nonumber \\
    &= \gamma(\bm{I} -\Sxi\xxt)\Wt + \rho_k[(\bm{Z}_k\Wt\Wt^T+ \Wt\Wt^T\bm{Z}_k)\bm{x}\bm{x}^T -\gamma\Sxi\bm{x}\bm{x}^T\bm{Z}_k^T + \gamma\bm{Z}_k]\Wt + \mathcal{O}(\rho_k^2) \nonumber\\
    \bm{G_U} &=  \bm{W}^T(\bm{W}\bm{U}-\bm{I})\bm{x}\bm{x}^T +\gamma\bm{U} \nonumber\\
    &= \Wt^T(\bm{I}+\rho_k\bm{Z}_k^T)[(\bm{I}+\rho_k\bm{Z}_k)\Wt\Wt^T(\bm{I}+\rho_k\bm{Z}_k)-\bm{I}] \bm{x}\bm{x}^T +\gamma\Wt^T(\bm{I}+\rho_k\bm{Z}_k) + \mathcal{O}(\rho_k^2) \nonumber \\
    &=\gamma\Wt^T(\bm{I} - \Sxi\bm{x}\bm{x}^T) + \rho_k\Wt^T[(\bm{Z}_k\Wt\Wt^T + \Wt\Wt^T\bm{Z}_k)\bm{x}\bm{x}^T-\gamma\bm{Z}_k^T\Sxi\bm{x}\bm{x}^T + \gamma\bm{Z}_k] + \mathcal{O}(\rho_k^2) \nonumber
\end{align}
As was shown Section \ref{secTangent}, the dependency of the projection on the gradient is via $\bm{G_W}^T + \bm{G_U}$, which can be calculated as:
\begin{align}
 \bm{G_W}^T + \bm{G_U} &= -\gamma\rho_{ij}\Wt^T(\bm{Z}_{ij}\xxt\Sxi + \bm{Z}_{ij}^T\Sxi\xxt) + 2\Wt^T\text{Sym}(\bm{A})+ \mathcal{O}(\rho_{ij}^2) \nonumber \\ &=  -\gamma\rho_{ij} \Wt^T C_{ij}\sum_{q \in [1,n]} (S^{ij}_{qj}x_{j}x_q \bm{v}_{i}\bm{v}_q^T + S^{ij}_{iq}x_{i}x_q \bm{v}_{j}\bm{v}_q^T) + 2\Wt^T\text{Sym}(\bm{A}) + \mathcal{O}(\rho_{ij}^2).
 \label{Gz}
\end{align}
In the above, $\bm{A} = \gamma(\bm{I}_n-\bm{x}\bm{x}^T\Sxi) + \rho_{ij}[(\bm{Z}_{ij}\Wt\Wt^T + \Wt\Wt^T\bm{Z}_{ij})\bm{x}\bm{x}^T + \gamma\bm{Z}_{ij}]$, and $S^{ij}_{rs}$ are coefficients defined in Eq.\ref{Sij}. In the second line, we replaced $\bm{Z}_{k} \equiv \bm{Z}_{ij} = C_{ij}(\kappa_{ij}\bm{v}_{i} \bm{v}_{j}^T + \bm{v}_{j}\bm{v}_{i}^T)$ (Eq.\ref{nk}), $\bm{x} = \sum_{q \in [1,n]} x_q \bm{v}_q$, and used $\Sxi\bm{v}_q = s_q^{-1}\bm{v}_q$ and $\bm{v}_p^T\bm{v}_q = \delta_p^q$.

The inner product of the gradient on a vector in the tangent space becomes $\bm{t}^T\bm{g} = \trace (\bm{T_W}(\bm{G_W}^T+\bm{G_U}))$ (Eq.\ref{tangpr1}), where from Lemma \ref{lemmatangent} we have $\bm{T_W} = \Wt\Wt^T\bm{\Omega}\Wt + \bm{K}\Wt_{\bot}$. Replacing the $\bm{G_W}^T + \bm{G_U}$ term from Eq.\ref{Gz}, we first see that the gradient has no projection on the second term of the tangent vector containing $\bm{K}$, as we have $\Wt_{\bot}\Wt^T = 0$. Further, the contribution of the term $2\Wt\text{Sym}(\bm{A})$ in Eq.\ref{Gz} also disappears as the trace of the product of symmetric and skew-symmetric matrices is zero. This leaves us with calculating the projection on tangent vectors corresponding to $\bm{\Omega}$. As described in Section \ref{secTangent}, the associated basis vectors are $\bm{t}_{rs} = (\bm{T_W}^T,\bm{T_W})$, where $\bm{T_W}=\Wt\Wt^T\bm{\Omega}_{rs}\Wt$, and $\bm{\Omega}_{rs}:=(\bm{v}_r\bm{v}_s^T - \bm{v}_s\bm{v}_r^T)/\sqrt{2\omega_r\omega_s(\omega_r+\omega_s)}$. The projection on $\bm{t}_{rs}$ becomes:
\begin{align}
\bm{t}_{rs}^T\bm{g}(\bm{x};\tilde{\bm{\theta}}\!+\! \rho_k\bm{n}_k) &= \trace (\Wt\Wt^T \bm{\Omega}_{rs}\Wt(\bm{G_W}^T + \bm{G_U}))  \nonumber
\\
&= \frac{\gamma \rho_{ij}C_{ij}\sqrt{\omega_s\omega_r}}{\sqrt{2(\omega_s+\omega_r)}}[x_j(S^{ij}_{sj}x_s \delta_i^r - S^{ij}_{rj}x_r \delta_i^s) + x_i(S^{ij}_{is}x_s \delta_j^r - S^{ij}_{ir}x_r \delta_j^s)] + \mathcal{O}(\rho_{ij}^2) \nonumber
\\ &= \rho_{ij}\sqrt{2(\omega_s + \omega_r)}\G_{ij}^{s,r}(\bm{x}) + \mathcal{O}(\rho_{ij}^2)\label{gtrs}
\end{align}
where in the last line we defined the rank-3 tensor $\G$ with components that are:
\begin{align}
\G_{ij}^{s,r}(\bm{x}) = -\G_{ij}^{r,s} (\bm{x}) =\frac{\gamma C_{ij}\sqrt{\omega_r\omega_s}}{2(\omega_r + \omega_s)} [x_j(S^{ij}_{sj}x_s \delta_i^r - S^{ij}_{rj}x_r \delta_i^s) + x_i(S^{ij}_{is}x_s \delta_j^r - S^{ij}_{ir}x_r \delta_j^s)].
\end{align}
(note that the lower index [$k$] is a composite index and we use it interchangeably with $ij$, i.e. $\G_{k}^{s,r} \equiv \G_{ij}^{s,r}$). The projection of the gradient on the tangent space becomes $\bm{g}_T = \sum_{r>s} (\bm{t}_{rs}^T\bm{g})\,\bm{t}_{rs} = (\bm{G}_{\bm{U}_T},\bm{G}_{\bm{U}_T}^T)$, where:
\begin{align} \label{Gutk}
\bm{G}_{\bm{U}_T}(\bm{x};\tilde{\bm{\theta}}\! +\!\rho_{ij}\bm{n}_{ij})  &= \rho_{ij}\Wt^T \sum_{r>s}\G_{ij}^{s,r} (\sqrt{\frac{\omega_r}{\omega_s}}\bm{v}_s\bm{v}_r^T-\sqrt{\frac{\omega_s}{\omega_r}}\bm{v}_r\bm{v}_s^T) + \mathcal{O}(\rho_{ij}^2)
\\ &= \Wt^T (\Wt\Wt^T)^{-\frac{1}{2}} (\rho_{ij}\G_{ij}^{:,:})(\Wt\Wt^T)^{\frac{1}{2}} + \mathcal{O}(\rho_{ij}^2)\nonumber
\end{align}
(In the second line above we used $\G_{ij}^{:,:} = \sum_{s,r} \G_{ij}^{r,s}\bm{v}_s\bm{v}_r^T$).
Due to linearity, the tangential projection at point $\bm{N}\!\bm{\rho} = \sum_k \rho_k \bm{n}_k$ can be found by summing over $k$, leading to Eq.\ref{gT} in the main text:
\begin{align} \label{gradUt}
\bm{G}_{\bm{U}_T}(\bm{x};\tilde{\bm{\theta}}\! +\!\bm{N}\bm{\rho}) &= \sum_{k} \bm{G}_{\bm{U}_T}(\bm{x};\tilde{\bm{\theta}} \!+\!\bm{\rho}_k\bm{n}_k) = \Wt^T (\Wt\Wt^T)^{-\frac{1}{2}} (\sum_k{\rho_k}\bm{\mathcal{G}}_{k}^{:,:})(\Wt\Wt^T)^{\frac{1}{2}} + \mathcal{O}(|\bm{\rho}|^2)
\end{align}

Next, we will find the effect of the tangential gradient on the representations. For a given stimulus $\bm{s}$, the change in its representation due to the tangential gradient is $\Delta \bm{h} = -\eta\bm{G}_{{\bm{U}}_T} \bm{s}$. Replacing $\bm{G}_{{\bm{U}}_T}$ from Eq.\ref{gradUt}, and using $\bm{h} = \bm{U}\bm{s} \approx \Wt^T\bm{s}$ (for small deviations from the manifold), we have:
\begin{align}
\bm{h}^T\Delta\bm{h} = -\eta\bm{s}^T (\Wt\Wt^T)^{\frac{1}{2}} (\sum_k{\rho_k}\bm{\mathcal{G}}_{k}^{:,:})(\Wt\Wt^T)^{\frac{1}{2}} \bm{s} = 0
\end{align}
The second equality results from the fact that the quadratic form of a skew-symmetric matrix ($\G_k^{:,:}$) is zero. This shows that, as expected, the tangential projection of the gradient is equivalent to a small rigid-body rotation of the representations around the origin. Further, $\Wt^T$ on the left hand side of Eq.\ref{gradUt} makes $\Delta \bm{h}$ to be a linear combination of the columns of $\Wt^T$, keeping the representations in the column-space of $\Wt^T$ (note also that the column-space of $\Wt^T$ stays fixed over a tangential update, which is again due to the $\Wt^T$ term on the left hand side of Eq.\ref{gradUt}). To characterize this rotation, we measure its effect on the $n$ representations vectors. Specifically, we define $\Delta {\varphi}_{sr}$ to be the the pairwise angular displacement of representation $\bm{h}_s \approx \Wt^T\bm{v}_s$ toward $\bm{h}_r \approx \Wt^T\bm{v}_r$:
\begin{align}
    \Delta {\varphi}_{sr} := \frac{\tilde{\bm{h}}_r^T\Delta\bm{h}_s}{|\tilde{\bm{h}}_r||\tilde{\bm{h}}_s|} 
    = \eta \sum_k{\rho_k}\G_{k}^{s,r} + \mathcal{O}(|\bm{\rho}|^2),
    \label{phisr}
\end{align}
where we replaced $\Delta\bm{h}_s$ from the above and used $|\bm{h}_s| = \sqrt{\omega_s}$, and $\G_{k}^{r,s} = \bm{v}_r^T\G_{k}^{:,:}\bm{v}_s$. This is Eq.\ref{dphi} in the main text.

\section{Derivation of the Diffusion Coefficients for the Case with a Frequent Stimulus} \label{derfreqstimulus}
Here we derive approximate analytical solutions for the diffusion coefficients for the case with a frequent stimulus.
\subsection{Approximation of the diffusion summation} \label{approxsum}
Since in the $D_{sr}$ summation in Eq.\ref{Dsr}, $k \equiv (ij)$ and $l \equiv (pq)$ are composite indices with $i,j,p,q \in [1,n]$, the summation is in general performed over $n^4$ terms (first line of Eq.\ref{Dsrapprox} below). However, as we'll see next, most of the terms are zero and under different limits, we can further constrain it to a subset of indices. 

\underline{$n \gg 1$}: 
If we consider the equation for $\G_{ij}^{r,s}$ for a given $r$ and $s$ (Eqn, \ref{G}), we see that it is non-zero only when at least either $i$ or $j$ are equal to $r$ or $s$. For a fixed $r$ and $s$, looping over $i,j \in [1,n]$ leads to a maximum of two non-zero components over the $\G_{ii}^{r,s}$ terms, and up to $\approx 4n$ non-zero components over the $\G_{ij(i \neq j)}^{r,s}$ terms. Hence, for large $n$,
the terms corresponding to the first set can be ignored. This leaves us with a summation over indices $(ij,pq)$ for $i\neq j$ and $p \neq q$, which consists of up to $\approx 16n^2$ non-zero terms.
Additionally, both $\Gbar_{ij,pq}^{r,s}$ and $\langle \rho_{ij}\rho_{pq}\rangle$ are proportional to terms like $\langle x_{r/s}x_{i/j}x_{r/s}x_{p/q}\rangle_x$, which are zero if an odd number of indices are equal. This leaves us with a summation over indices of $(ij,ij)$ and $(ij,ji)$, where  $(i,j) \in \mathbb{C}_{[ij]} \cup \mathbb{C}_{[ji]}$, for set $\mathbb{C}_{[ij]} := \{(i,j)| i \in \{r,s\}, j\in [1,n] \backslash\{r,s\} \}$ (this is shown in the second line of Eq.\ref{Dsrapprox} below). Since $\mathbb{C}_{[ij]}$ contains $\approx 2n$ terms, the whole summation will have up to $\approx 8n$ terms.

\underline{$\alpha \ll 1$}:
For small $\alpha$, we can further limit the summation to indices $(ij,ij)$ for $i > j$. This is because the first order correction to the summands in Eq.\ref{Dsr} is: $\approx \langle \rho_k\rho_l\rangle|_{\alpha=0} \Gbar^{r,s}_{k,l}|_{\mathcal{O}(\alpha)} + \langle\rho_k\rho_l\rangle|_{\mathcal{O}(\alpha)} \Gbar^{r,s}_{k,l}|_{\alpha=0}$, and so we can limit the sum to indices for which $\langle\rho_k\rho_l\rangle|_{\alpha=0}$ or $\Gbar^{r,s}_{k,l}|_{\alpha=0}$ are non-zero. This corresponds to the previous Gaussian stimuli case and the terms were calculated in Section \ref{whitenoise} of the main text.
To summarize:
\begin{align} \label{Dsrapprox}
    {D}_{sr} &=(\eta^2/2) \sum_{i,j,p,q=1}^{n} \langle \rho_{ij} \rho_{pq} \rangle \,\Gbar_{ij,pq}^{r,s} 
    \nonumber \\& \approx (\eta^2/2)\sum_{(i,j) \in \mathbb{C}_{[ij]} \cup \mathbb{C}_{[ji]}} (\langle \rho_{ij}^2 \rangle \,\Gbar_{ij,ij}^{r,s} + \langle \rho_{ij} \rho_{ji} \rangle \,\Gbar_{ij,ji}^{r,s} ) \qquad  n\gg 1 \nonumber \\ 
    &\approx (\eta^2/2) \sum_{(i,j) \in (\mathbb{C}_{[ij]} \cup \mathbb{C}_{[ji]}) \wedge\, (i>j)} \langle \rho_{ij}^2 \rangle \,\Gbar_{ij,ij}^{r,s} \qquad\qquad \alpha \ll 1, n\gg 1
\end{align}
We see that for a given $s$ and $r$, we will have up to $\approx 2n$ terms in the summation under the $n\gg 1$ and $\alpha \ll 1$ limit.

To fully characterize the diffusion in the representation space, we only need to find two coefficients $D_{ab}$ ($=D_{ba}$) and $D_{bc}$ ($= D_{cb}$), where $a=1$, $b \neq c \in [2,n]$. Without loss of generality, we take $b=2$ and $c=3$. We also take $d=4$ to be another index within the background subspace. In the next two sections, we calculate these coefficients under two limits.

\subsection{Calculation for large $n$ and small $\alpha$} \label{Gcalcfreqstim}
For small $\alpha$, $D_{ba}$ and $D_{bc}$ can be found by replacement in the last line of Eq.\ref{Dsrapprox}:
\begin{align*}
&D_{ba} \approx \frac{n\eta^2}{2}(\langle\rho_{db}^2\rangle\Gbar_{db,db}^{a,b} + \langle \rho_{da}^2\rangle \Gbar_{da,da}^{a,b}) \qquad  (\alpha \ll 1, n \gg 1)
\nonumber \\
&D_{bc} \approx \frac{n\eta^2}{2}(\langle \rho_{db}^2 \rangle \Gbar_{db,db}^{c,b} + \langle \rho_{dc}^2 \rangle \Gbar_{dc,dc}^{c,b}) \qquad  (\alpha \ll 1, n \gg 1)
\end{align*}
The relevant components of tensor $\G$ can be calculated from Eq.\ref{G} for the case of frequent stimulus. For $\gamma \ll 1$ we have $\omega_i \approx 1$, $\kappa_{ij(i>j>1)}=1$, $C_{ij (i\neq j >1)} \approx 1/2$, $\langle x_i^2x_j^2\rangle_{x (i \neq j)} = 1-\alpha$, and for small $\alpha$: $\kappa_{i1(i>1)} \approx 1-\alpha/\gamma$. After replacement and keeping the terms to the first order of $\alpha$, the components of $\G$ and $\Gbar$ become:
\renewcommand{\arraystretch}{2}
\begin{center}
\[
\begin{array}{ll}
 \G^{a,b}_{db} = \frac{\gamma}{8}\frac{x_ax_d}{s_d}(1 + \frac{s_d}{s_a}) & \Gbar_{db,db}^{a,b} =  \frac{\gamma^2}{64}\frac{\langle x_a^2x_d^2\rangle_x }{s_d^2}(1 + \frac{s_d}{s_a})^2 = \frac{\gamma^2}{16} \nonumber \\ 
 \G^{a,b}_{da} = \frac{-\gamma}{4}\frac{x_bx_d}{s_b}\frac{1 + \kappa_{da}}{\sqrt{2(1 + \kappa_{da}^2})} \approx  \frac{-\gamma}{4}\frac{x_bx_d}{s_d} & \Gbar^{a,b}_{da,da} = \frac{\gamma^2}{16}\frac{\langle x_b^2x_d^2\rangle_x}{s_d^2} = \frac{\gamma^2}{16}(1+\alpha)  \nonumber \\ 
 \G^{c,b}_{db} = \frac{\gamma}{4} \frac{x_cx_d}{s_d}, & \Gbar^{c,b}_{db,db} = \frac{\gamma^2}{16} \frac{\langle x_c^2x_d^2 \rangle_x}{s_d^2} = \frac{\gamma^2}{16}(1+\alpha)\nonumber\\
 \G^{c,b}_{dc}  = \frac{-\gamma }{4} \frac{x_bx_d}{s_d}, &  \Gbar^{c,b}_{dc,dc} =  \frac{\gamma^2 }{16} \frac{\langle x_b^2x_d^2 \rangle_x}{s_d^2} = \frac{\gamma^2}{16}(1+\alpha)
 \nonumber
\end{array}
\]
\end{center}
Replacement of the above and the fluctuations from Eq.\ref{rhoijpq} leads to Eq.\ref{Dsrsmallalpha} of the main paper.

\subsection{Calculation for large $n$ and $\alpha \gg \gamma$} \label{CalcfreqstimLargeN}
Here, we derive closed form solutions under the assumptions of $\gamma \ll 1$ and $s_a,s_b,s_a-s_b \gg \gamma$ (equivalently $\alpha \gg \gamma$). Using the second line of Eq.\ref{Dsrapprox}, the pairwise diffusion coefficients become:
\begin{align}
D_{ba} &\approx \frac{n\eta^2} {2}(\langle\rho_{db}^2\rangle\Gbar_{db,db}^{a,b} + \langle \rho_{da}^2\rangle \Gbar_{da,da}^{a,b} + \langle \rho_{ad}^2\rangle \Gbar_{ad,ad}^{a,b} + 2\langle \rho_{ad}\rho_{da}\rangle \Gbar_{ad,da}^{a,b}) \quad && (n \gg 1)
\nonumber\\
& \approx \frac{\eta^3\gamma^4n}{2} [\frac{\langle x_a^2x_d^2\rangle\langle x_b^2x_d^2\rangle(\frac{1}{s_a} + \frac{1}{s_b})^2}{64s_b^3} + \frac{\langle x_a^2x_d^2\rangle\langle x_b^2x_d^2\rangle}{64s_a^2s_b^3} +  \frac{\langle x_a^2x_d^2\rangle\langle x_b^2x_d^2\rangle}{64s_as_b^4} +  \frac{\langle x_a^2x_d^2\rangle\langle x_b^2x_d^2\rangle}{16(s_a+s_b)s_as_b^3} ] &&(\alpha \gg \gamma) \nonumber \\
&= \frac{\eta^3\gamma^4n}{128} \frac{\langle x_a^2x_d^2\rangle\langle x_b^2x_d^2\rangle} {s_b^5} [1 + \frac{3s_b}{s_a} + \frac{2s_b^2}{s_a^2}  + \frac{4s_b^2}{s_a(s_a+s_b)}] \nonumber \\
&= \frac{\eta^3\gamma^4n}{16(1-\alpha)^3}\frac{1}{16}[1 + 3(1-\alpha) + 2(1-\alpha)^2  + \frac{4(1-\alpha)^2}{2-\alpha}] = \frac{\eta^3\gamma^4n}{16(1-\alpha)^3} [\frac{16-28\alpha+15\alpha^2-2\alpha^3}{8(2-\alpha)}] \nonumber
\\
\nonumber\\
D_{bc} &\approx \frac{n\eta^2}{2}(\langle \rho_{db}^2 \rangle \Gbar_{db,db}^{c,b} + \langle \rho_{dc}^2 \rangle \Gbar_{dc,dc}^{c,b}) \qquad\qquad && (n \gg 1) \nonumber \\
&= \frac{\eta^3\gamma^4n}{16} \frac{\langle x_b^2x_d^2 \rangle\langle x_c^2x_d^2 \rangle}{s_b^5(1-\frac{\gamma}{s_b})} \nonumber \\ &= \frac{\eta^3\gamma^4n}{16(1-\alpha)^3(1-\frac{\gamma}{s_b})}
\overset{\gamma \ll 1}{\approx} \frac{\eta^3\gamma^4n}{16(1-\alpha)^3}
\end{align} 
In the above, we replaced the appropriate quantities of $\langle \rho_{ij}\rho_{pq}\rangle$ and $\Gbar_{ij,pq}^{r,s}$, calculated from Eq.\ref{rhoijpq} and Eq.\ref{G} respectively (see below). In deriving these terms, we used $\omega_i \approx 1$, $\langle x_i^2x_j^2\rangle_{x (i \neq j)} = 1-\alpha$, $\lambda_{db} = 2(s_b - \gamma)$, $C_{db} = 1/(2\sqrt{1-\gamma/s_b})$, $\kappa_{ij(i>j>1)}=1$ and $\kappa_{ad}, \kappa_{da} \approx 0$, all of which result from the mentioned assumptions.
\begin{gather*}
\langle \rho_{db}^2\rangle = \frac{\eta\gamma^2}{s_d^3}\langle x_b^2x_d^2\rangle = \langle \rho_{dc}^2\rangle = \frac{\eta\gamma^2}{s_d^3}\langle x_c^2x_d^2\rangle = \frac{\eta\gamma^2}{(1-\alpha)^2} \\
\langle \rho_{da}^2\rangle = \frac{\eta\gamma^2(\omega_a+\omega_d)}{2s_d[2-\gamma(\frac{\kappa_{da}}{s_d} + \frac{1}{s_a})]}\frac{(\frac{\kappa_{da}}{s_d} + \frac{1}{s_a})^2}{1+\kappa_{da}^2}\langle x_a^2x_d^2\rangle_x \overset{\substack{\alpha \gg \gamma\\ \gamma \ll 1}}{\approx} \frac{\eta\gamma^2}{2s_a^2s_d}\langle x_a^2x_d^2\rangle_x, \\
\langle \rho_{ad}^2\rangle = \frac{\eta\gamma^2(\omega_a+\omega_d)}{2s_a[2-\gamma(\frac{\kappa_{ad}}{s_a} + \frac{1}{s_d})]}\frac{(\frac{\kappa_{ad}}{s_a} + \frac{1}{s_d})^2}{1+\kappa_{ad}^2}\langle x_a^2x_d^2\rangle_x  \overset{\substack{\alpha \gg \gamma\\ \gamma \ll 1}}{\approx} \frac{\eta\gamma^2\langle x_a^2x_d^2\rangle_x}{2s_as_d^2},
\\ \langle \rho_{ad}\rho_{da}\rangle = \frac{\eta\gamma^2(\omega_a+\omega_d)}{[2s_a-\gamma(\kappa_{ad} + \frac{s_a}{s_d})] +[2s_d-\gamma(\kappa_{da} + \frac{s_d}{s_a})]}\frac{(\frac{\kappa_{ad}}{s_a} + \frac{1}{s_d})(\frac{\kappa_{da}}{s_d} + \frac{1}{s_a})}{\sqrt{1+\kappa_{ad}^2}\sqrt{1+\kappa_{da}^2}}\langle x_a^2x_d^2\rangle_x 
 \overset{\substack{\alpha \gg \gamma\\ \gamma \ll 1}}{\approx} \frac{\eta\gamma^2\langle x_a^2x_d^2\rangle_x}{(s_a+s_d)s_as_d}
\end{gather*}
\renewcommand{\arraystretch}{2}
\begin{center}
\[
\begin{array}{ll}
 \G^{a,b}_{db} = \frac{\gamma}{4}\frac{x_ax_d\sqrt{\omega_a}(\frac{1}{s_a} + \frac{1}{s_d})}{\omega_a+\omega_d} & \Gbar_{db,db}^{a,b} =  \frac{\gamma^2}{16}\frac{\langle x_a^2x_d^2\rangle_x\omega_a(\frac{1}{s_a} + \frac{1}{s_d})^2}{(\omega_a+\omega_d)^2} \overset{\gamma \ll 1}{\approx} \frac{\gamma^2}{64}\frac{(\frac{s_d}{s_a} + 1)^2\langle x_a^2x_d^2\rangle_x}{s_d^2} \nonumber \\ 
 \G^{a,b}_{da} = \frac{-\gamma}{2}\frac{x_bx_d\sqrt{\omega_a\omega_d}}{s_b(\omega_a + \omega_d)^{3/2}}\frac{1 + \kappa_{da}}{\sqrt{1 + \kappa_{da}^2}}  & \Gbar^{a,b}_{da,da} = \frac{\gamma^2}{4}\frac{\langle x_b^2x_d^2\rangle_x \omega_a\omega_d}{s_b^2(\omega_a + \omega_d)^3}\frac{(1 + \kappa_{da})^2}{1 + \kappa_{da}^2} \overset{\substack{\alpha \gg \gamma\\ \gamma \ll 1}}{\approx} \frac{\gamma^2}{32} \frac{\langle x_b^2x_d^2\rangle_x}{s_d^2} \nonumber \\ 
 \G^{a,b}_{ad} = \frac{-\gamma}{2}\frac{x_bx_d\sqrt{\omega_a\omega_d}}{s_b(\omega_a+\omega_d)^{3/2}}\frac{1 + \kappa_{ad}}{\sqrt{1 + \kappa_{ad}^2}} & \Gbar^{a,b}_{ad,ad} = \frac{\gamma^2}{4}\frac{\langle x_b^2x_d^2\rangle_x \omega_a\omega_d}{s_b^2(\omega_a+\omega_d)^3}\frac{(1 + \kappa_{ad})^2}{1 + \kappa_{ad}^2} \overset{\substack{\alpha \gg \gamma\\ \gamma \ll 1}}{\approx} \frac{\gamma^2}{32} \frac{\langle x_b^2x_d^2\rangle_x}{s_d^2} \nonumber \\
 & \Gbar^{a,b}_{ad,da} =\frac{\gamma^2}{4}\frac{\langle x_b^2x_d^2\rangle_x \omega_a\omega_d}{s_b^2(\omega_a+\omega_d)^3}\frac{(1 + \kappa_{da})}{\sqrt{1 + \kappa_{da}^2}}\frac{(1 + \kappa_{ad})}{\sqrt{1 + \kappa_{ad}^2}} \overset{\substack{\alpha \gg \gamma\\ \gamma \ll 1}}{\approx} \frac{\gamma^2}{32} \frac{\langle x_b^2x_d^2\rangle_x}{s_d^2} \nonumber \\
 \G^{c,b}_{db} = \frac{\gamma}{4} \frac{x_cx_d}{s_d\sqrt{\omega_d}}, & \Gbar^{c,b}_{db,db} = \frac{\gamma^2}{16} \frac{\langle x_c^2x_d^2 \rangle_x}{s_d^2\omega_d} = \frac{\gamma^2}{16}\frac{1}{(1-\alpha)\omega_d}\nonumber\\
 \G^{c,b}_{dc}  = \frac{-\gamma }{4} \frac{x_bx_d}{s_d\sqrt{\omega_d}}, &  \Gbar^{c,b}_{dc,dc} =  \frac{\gamma^2 }{16} \frac{\langle x_b^2x_d^2 \rangle_x}{s_d^2\omega_d} = \frac{\gamma^2}{16}\frac{1}{(1-\alpha)\omega_d}
 \nonumber
\end{array}
\]
\end{center}
Eq.\ref{Ds_largealpha} in the main text can be derived from the above by using $D_a \approx n D_{ab}$ and $D_b \approx nD_{bc}$. These results reproduce the numerically measured curves of diffusion as a function of $\alpha$ very well.

\end{document}